\newcommand{\bu}{\bm{u}}
\newcommand{\bnu}{\bm{\nu}}
\newcommand{\bpi}{\bm{\pi}}
\newcommand{\bx}{\bm{x}}
\newcommand{\demi}{\frac{1}{2}}
\newcommand{\bS}{\bm{S}}
\newcommand{\bv}{\bm{v}}
\newcommand{\bw}{\bm{w}}
\newcommand{\br}{\bm{r}}
\newcommand{\ba}{\bm{a}}
\newtheorem{remark}{Remark}
\newtheorem{proposition}{Proposition}
\title{Efficient solvers for shallow-water Saint-Venant equations and
debris transportation-deposition models 
}
\author{Florian De Vuyst \thanks{email: fdevuyst@utc.fr}}
\affil{Universit\'e de Technologie de Compi\`egne, Alliance Sorbonne Universit\'e, Laboratoire de Math\'ematiques Appliqu\'ees de Compi\`egne (LMAC), Compi\`egne, France.}
\date{15 February 2021}
\begin{document}
\maketitle
\begin{abstract}
This research is aimed at achieving an efficient
digital infrastructure for evaluating risks and damages caused by tsunami flooding.
This research has been mainly focused on the suitable
modeling of debris dynamics for a simple (but accurate enough) assessment of damages. For different reasons including computational performance and Big Data management issues, we focus our research on Eulerian debris flow modeling. 
Rather than using complex multiphase debris models, we rather
use an empirical transportation and deposition model that takes into account
the interaction with the main water flow, friction/contact with the ground
but also debris interaction.
In particular, for debris interaction, we have used ideas coming from
vehicular traffic flow modeling. We introduce a velocity regularization
term similar to the so-called ``anticipation term'' in traffic flow modeling
that takes into account the local flow between neighboring debris and makes
the problem mathematically well-posed. It prevents from the generation of ``Dirac
measures of debris'' at shock waves.
As a result, the model is able to capture emerging phenomenons like debris aggregation and accumulations, and possibly to react on the main flow
by creating hills of debris and make the main stream deviate. 
We also discuss the way to derive quantities of interest (QoI), especially ``damage functions'' from the debris density and momentum fields. 
We believe that this original unexplored debris approach can lead to 
a valuable analysis of tsunami flooding damage assessment with 
Physics-based damage functions.
Numerical experiments show the nice behaviour of the numerical solvers, including the solution of  Saint-Venant's shallow water equations and debris dynamics equations. \\ [2ex]

\textbf{Keywords. } Tsunami flooding, risk assessment, uncertainty quantification,
Saint-Venant shallow water equations, debris dynamics, quantity of interest, damage function, design of computer experiment, Big Data, data analytics, database, datawarehouse
\end{abstract}
\newpage
%
%
\section{Executive summary} 
%
%
This work is dedicated to the construction of an efficient numerical infrastructure for evaluating risks and damages caused by tsunami flooding. 

In this document, we mainly focus on the suitable
modeling of debris dynamics and the derivation of damage functions. For different reasons including computational performance and Big Data management issues, we focus our research on Eulerian debris flow modeling. 
Rather than using complex multiphase debris models (see for example~\cite{He2014,Pudasaini2012,Pudasaini2005,Bouchut2013}) or complex non-Newtonian models, we rather
use an empirical transportation and deposition model that takes into account
the interaction with the main water flow, friction/contact with the ground
but also debris interaction.
In particular, for debris interaction, we have used ideas coming from
vehicular traffic flow modeling (\cite{awrascle,zhang}). We introduce a velocity regularization
term similar to the so-called ``anticipation term'' in traffic flow modeling
that takes into account the local flow between neighboring debris and makes
the problem mathematically well-posed. It prevents from the generation of ``Dirac
measures of debris'' at shock waves.
As a result, the model is able to capture emerging phenomenons like debris aggregation and accumulations, and possibly to react on the main flow
by creating hills of debris and make the main stream deviate.  \medskip

We also discuss the way to derive quantities of interest (QoI), especially ``damage functions'' from the debris density and momentum fields. 
We believe that this original unexplored debris approach can lead to 
a valuable analysis of tsunami flooding damage assessment with 
Physics-based damage functions.
Numerical experiments show the nice behavior of the numerical solvers, including the solution of  Saint-Venant's shallow water equations and debris dynamics equations. 
\medskip

The next step is to introduce this debris model into a high-performance code like
VOLNA-OP2~\cite{dutykh,giles,giles2020}, then use a computer design of experiment (DoCE) with damage analysis,
sensitivity analysis and uncertainty quantification. There is clearly a 
Big Data issue in this computational context and we will try to propose original numerical methodologies but also suitable parallel software environments to process
the data and extract knowledge in this risk assessment framework.

\section{Motivation and introduction} 
%
%
Among the natural disasters, tsunamis generated by earthquakes or landslides
may cause the death of thousands of people and damage important
urban infrastructures. Recent events like the 2004 Indonesia and 2011 Japan
tsunamis have caused dramatic damages by severe flooding (see the figure~\ref{fig:flooding} below).
\begin{figure}[h!]
\begin{center}
\includegraphics[width=0.6\textwidth]{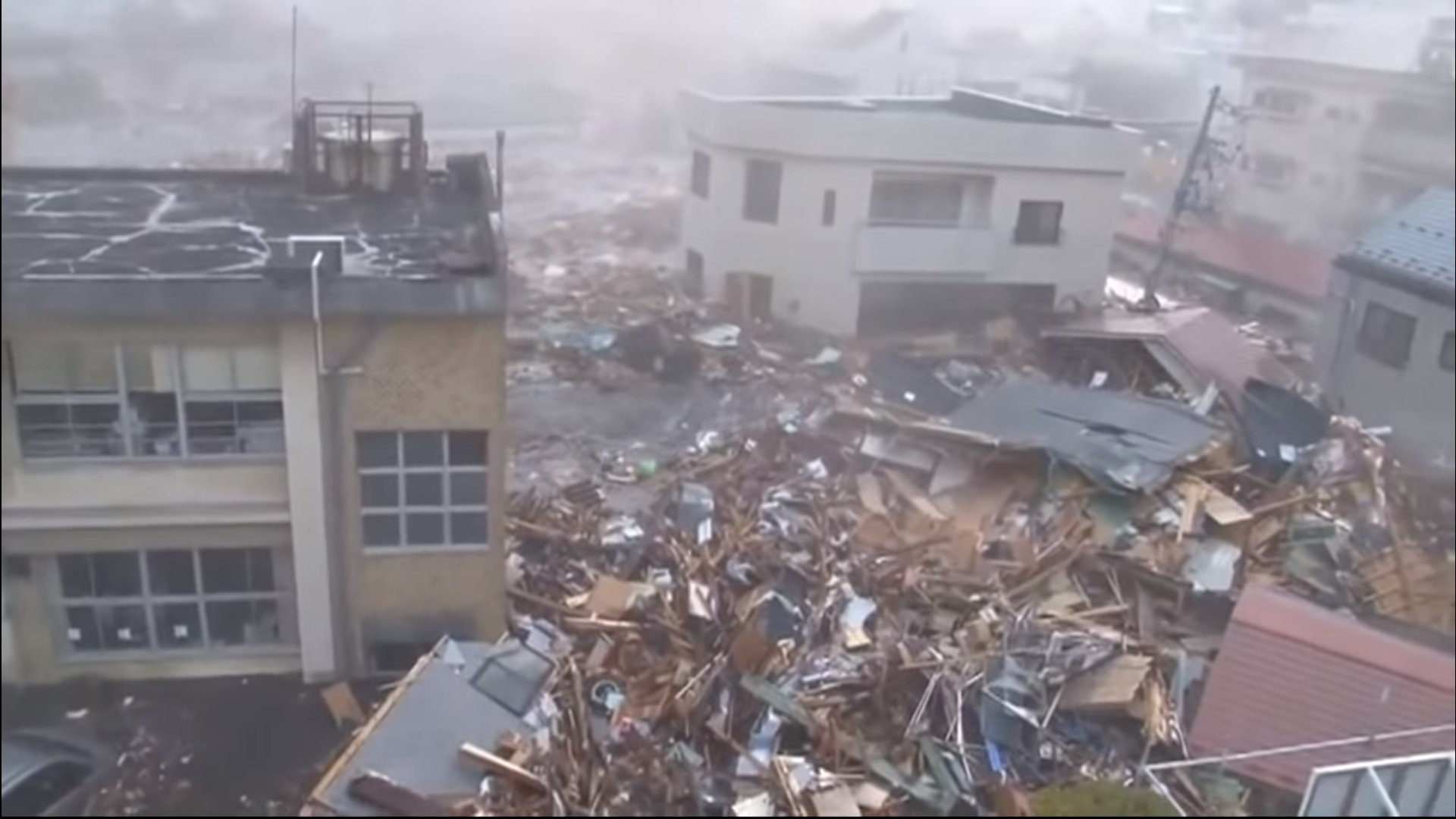}
\end{center}
\caption{Case of debris accumulation during the 2011 Japan tsunami flooding
(courtesy Youtube video).}\label{fig:flooding}
\end{figure}

More than a pure flooding, tsunami tidal waves drags millions of debris of all kinds:
cars, tree trucks, but also building materials, boats, etc. It is important
to prevent coastal population and cities from these major risks and to provide
forecast tools for that. Among the means of forecast, numerical simulation
appears to be powerful tool to assess damages and costs, achieve sensitivity analysis and uncertainty quantification~(UQ) and so on.
 
Acknowledged tsunami and general flooding computational codes are based on
the so-called shallow water equations, or Saint-Venant equations. They are
able to return rather good estimates of the tsunami wave propagation. The 
GPU-VOLNA code for example is a GPU-accelerated implementation
of the VOLNA code initially developed at CMLA ENS CACHAN by Dutykh et al~\cite{dutykh}.
It has be ported on large-scale GPU supercomputers by means of the OP2~platform~\cite{giles} that allow large-scale computations and designs of computer
experiments (DoCE) for statistical analysis. Runup elevation and
runup distances are the commonly damages estimates computed from shallow water
codes, and DoCE are able to return high-fidelity metamodels (or emulators)
of runup. There could be many other candidate damage functions like for
example instantaneous or accumulated flow rates that act as loadings on mechanical structures and as the main factor of fatigue breaking and generation of debris. 

However, Indonesia and Japan tsunami events have shown us that debris flows
are another important feature of tsunami flooding and, as such, have to be
taken into account into the general flooding flow. This need is originally
the reason of the present research.

One can observe that damage and risk assessment of tsunami flooding is a 
multicriteria/multiobjective problem subject to a high-dimensional
configuration space, where each evaluation requires heavy computations
of high-performance parallel supercomputers. There is clearly a
``Big Data'' dimension of the problem where we need to mine high-dimensional 
data (as computational results), extract data summaries, descriptors, 
indicators and risk/damage functions and provide emulators. The issue to perform 
this data analysis in a parallel efficient way is a current active field
of Research (map-reduce type algorithms, ...). Another issue is the way to perform
a design of computer experiment that can be done in an incremental manner 
in order to reduce metamodeling errors or uncertainty. This is also
a Big Data datawarehouse dimension that allows for easy analysis and 
visualization. \medskip

At our research developmental stage, this work is mainly dedicated to the
suitable modeling of debris dynamics that allows a rather good estimation of debris flow while being able to return damage functions. The models will also be
searched as simple as possible to allow ``fast'' evaluations and a full
exploration of the configuration space for statistical purposes. 
We also discuss the suitable design Saint-Venant solvers for tidal wave and
coast flooding applications where the capture of dry-wet phase transition
has to be done in an efficient and accurate manner. For that we introduce
a class of finite volume schemes we recently developed for multimaterial
compressible flow applications: the so-called Lagrange-flux schemes.  

%
%
\section{Introducing the Saint-Venant shallow-water equations}
%
Saint-Venant shallow water equations are a simplified two-dimensional model of a
truly three-dimensional flow of an incompressible fluid in contact with air and subject to gravity forces. These equations are mainly used for tsunami modeling as well as river flooding and for the analysis of dam break events.
In what follows, we will denote $g$ the gravity constant, $z=z(x,y)$ will
be the bathymetry ($z<0$) or landscape topography ($z>0$), $h=h(x,y)$ the
liquid depth and $\bu$ the vector field of depth-averaged velocity. 
In what follows, we will assume that $z$ is a Lipschitz continuous function.
Assuming that the $z$-dependency of the flow can be reduced to 
the knowledge of the water height, and supposing no ground drag forces, the mass and momentum balance equations read 
\begin{eqnarray}
&& \partial_t h + \nabla\cdot (h\bu) = 0, \label{eq:1} \\ [1.1ex]
&& \partial_t (h\bu) + \nabla\cdot(h \bu\otimes\bu) + \nabla p
 = -g h \nabla z, \label{eq:2}
\end{eqnarray}
where
\begin{equation}
p = p(h) = g \frac{h^2}{2} \label{eq:3}
\end{equation}
Here the notation ``$p$'' is used to emphasize the analogy with the usual
Euler equations of compressible gas dynamics, even it is not a pressure
from a thermodynamical point of view. 

It is well-known that this system of nonlinear partial differential equations
is hyperbolic. The quantity
\begin{equation}
c = \sqrt{\frac{dp}{dh}}=\sqrt{gh}	\label{eq:4}
\end{equation}
plays the role of a ``speed of sound'' in the system. For smooth solutions,
the energy quantity $\mathscr{E}$ defined by
\begin{equation}
\mathscr{E} = h \frac{|\bu|^2}{2} + g \frac{h^2}{2}.
\label{eq:5}
\end{equation}
is conserved and satisfies the additional balance equation
\[
\partial_t \mathscr{E} + \nabla\cdot (\mathscr{E}\bu) = - gh \nabla z \cdot \bu.
\]
As the energy is a convex function of the conservative variables $h$ and
$\bm{q}=h\bu$, it can also be considered as an entropy of the system. For general, 
possibly nonsmooth discontinuous solutions, we look for the physical 
(entropy) weak solution that fulfills the partial differential inequality
\begin{equation}
\partial_t \mathscr{E} + \nabla\cdot (\mathscr{E}\bu + p\bu) 
+ g h\nabla z\cdot \bu\leq 0
\label{eq:6}
\end{equation}
in the sense of distributions. \medskip

For numerical discretization, stable conservative entropy schemes are searched
in order to ensure convergence toward the entropy weak solution. In this
work, we propose to use the recent family of Lagrange-flux finite volume 
schemes \cite{devuyst} that provide both numerical stability and efficiency.
The construction of the Lagrange-flux is based on a Lagrangian remapping
process with a particular treatment of time discretization. This is detailed
in the next section.
\section{Lagrange-flux schemes for compressible Euler-type equations}
%
For complex compressible flows involving multiphysics phenomenons like e.g.\ high-speed elastoplasticity, multimaterial interaction, plasma, gas-particles, multiphase flows etc., a Lagrangian description of the flow is generally preferred for easier physical coupling. To ensure
robustness, some spatial remapping on a regular mesh may be added. A particular
case is the family of the so-called Lagrange+remap schemes~\cite{Hirt1974}, also referred to as Lagrangian remapping that apply a remap step on a reference (say Eulerian) mesh after each Lagrangian time advance step. Acknowledged legacy codes
implementing remapped Lagrange solvers usually define thermodynamical variables
at cell centers and velocity variables at mesh nodes (see figure~\ref{fig:1}).
\begin{figure}[h]
	\begin{center}
		\includegraphics[width=0.6\linewidth]{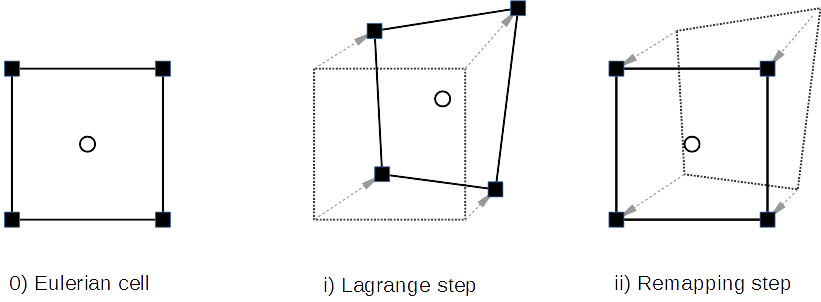}
		\caption{``Legacy'' staggered Lagrange-remap scheme: thermodynamical variables are located at cell centers (circles) whereas velocity variables are located at cell nodes (squares).}\label{fig:1}
	\end{center}
\end{figure}
\subsection{Performance issue, multicore/manycore computing}
In Poncet et al.~\cite{PARCO2015}, we have achieved a node-based performance analysis of a reference legacy Lagrange-remap hydrodynamics solver used in industry. By analyzing
each kernel of the whole algorithm, using roofline-type models \cite{Williams2009} on one side and
refined Execution Cache Memory (ECM) models  \cite{Treibig2010}, \cite{Stengel2015} on the other side, we have been able not only to quantitatively predict the performance of the whole algorithm --- with relative errors in the single digit range --- but also to identify a set of
features that limit the whole global performance. This can be roughly summarized into
three features:
\begin{enumerate}
	\item Staggered velocity variables involve a rather big amount of communication to/from CPU caches and memory with low arithmetic intensity, thus lowering the whole performance;
	\item Alternating direction (AD) strategies (see the appendix in~\cite{Collela1984}) or more specifically AD remapping procedures also generate too much communication with a loss of CPU occupancy and a rather poor multicore scalability.
	\item For multimaterial hydrodynamics using VOF-based interface reconstruction methods, 
	there is a strong loss of performance due to some array indirections and noncoalescent
	data in memory. Vectorization of such algorithms is also not trivial.\medskip
\end{enumerate}
From these observations and as a result of the analysis, we decided to ``rethink''
Lagrange-remap schemes, with possibly modifying some aspects of the solver in order
to improve node-based performance of the hydrocode solver. We have searched for
alternative formulations that lower communication and improve both arithmetic intensity
and SIMD property of the
algorithm. This redesign methodology has given us ideas of innovative Eulerian solvers. The so-called \emph{Lagrange-flux schemes} appear to be very promising in the Computational Fluid Dynamics (CFD) extended community, including
geophysical problems.  

Starting from a ``legacy'' staggered Lagrange-remap solver and related observed performance measurements, we want to improve the performance by modifying the computational approach under the following constraints and requirements:
\begin{enumerate}
	\item A Lagrangian solver must be used (for multiphysics coupling issue).
	\item To reduce communication, we prefer to use collocated cell-centered variables rather than
	a staggered scheme.
	\item To reduce communication, we prefer use a direct multidimensional remap solver rather than splitted alternating direction AD projections.
	\item The method should be simply extended to second-order accuracy (in both space and time). 
\end{enumerate}
Before going further, let us first comment the above requirements.
The second requirement should imply the use of a cell-centered Lagrange solver.
Fairly recently, Despr\'es and Mazeran in \cite{Despres2005} and 
Maire and et al.~\cite{Maire2007} 
have proposed pure cell-centered Lagrangian solvers based on 
the reconstruction of nodal velocities. In our study, we will examine if it is
possible to use approximate and simpler Lagrangian solvers in the Lagrange+remap context, in particular for the sake of performance. The fourth assertion
requires a full multidimensional remapping step, probably taking into
account geometric elements (deformation of cells and edges) if we want to ensure
high-order accuracy remapping. 
We have to find a good trade-off between simplifications-approximations and  accuracy (or properties) of the numerical solver.
\subsection{Lagrangian step of Lagrangian-remapping} \label{sec:3lag}
%
As example, let us consider the compressible Euler equations
for two-dimensional problems. Denoting $\rho,\ \bu=(u_i)_i,$ $i\in\{1,2\}$,\ $p$ and $E$ the density,
velocity, pressure and specific total energy respectively, the mass, momentum and
energy conservation equations are
\begin{equation}
\partial_t U_\ell + \nabla\cdot(\bu\, U_\ell) + \nabla\cdot\bpi_\ell=0,
\quad \ell=1,\dots,4,
\label{eq:7}
\end{equation}
where $U=(\rho,(\rho u_i)_i,\rho E)$, $\bpi_1=\vec 0$, $\bpi_2=(p,0)^T$,
$\bpi_3=(0,p)^T$ and $\bpi_4=p\bu$.
For the sake of simplicity, we will use a perfect gas equation of state
$p=(\gamma-1)\rho (E-\frac{1}{2}|\bu|^2)$, $\gamma\in(1,3]$. 
The speed of sound $c$ is given by $c=\sqrt{\gamma p/\rho}$.

For any volume $V_t$ that is advected by the fluid, from the Reynolds transport theorem we have
\[
\frac{d}{dt} \int_{\mathscr{V}_t} U_\ell\, d\bx =
\int_{\partial \mathscr{V}_t} 
\left\{\partial_t U_\ell+\nabla\cdot (\bu\, U_\ell)\right\}\, d\bx 
= - \int_{\partial \mathscr{V}_t} \bpi_\ell\cdot\bnu \, d\sigma
\]
where $\bnu$ is the normal unit vector exterior to $\mathscr{V}_t$. This leads
to a natural explicit finite volume scheme in the form
\begin{eqnarray}
|K^{n+1,L}| (U_\ell)_K^{n+1,L} = |K| (U_\ell)_K^n 
- \Delta t^n
\sum_{A^{n+\demi,L}\subset \partial K^{n+\demi,L}} |A^{n+\demi,L}|\, \bpi_A^{n+\demi,L}\cdot
\bnu_A^{n+\demi,L}.
\label{eq:8}
\end{eqnarray}
In expression~\eqref{eq:2}, the superscript ``L'' indicates the Lagrange evolution of the quantity.
Any Eulerian cell $K$ is deformed into the Lagrangian volume $K^{n+\demi,L}$ at time
$t^{n+\demi}$, and into the Lagrangian volume~$K^{n+1,L}$ at time $t^{n+1}$.
The pressure flux terms through the edges $A^{n+1/2,L}$ are evaluated at time~$t^{n+\demi}$ in order to get second-order accuracy in time. Of course, that means that we need a predictor
step for the velocity field $\bu^{n+\demi,L}$ at time $t^{n+\demi}$ (not written here
for simplicity).\smallskip

From now on, we will use the simplified notation $\bv^{n+\demi}=\bu^{n+\demi,L}$.
\subsection{Geometrical remapping step}\label{sec:3}
%
The remapping step consists in projecting the fields $U_\ell$ defined at cell centers
$K^{n+1,L}$ onto the initial (reference) Eulerian mesh with cells~$K$. Starting from an interpolated
vector-valued field~$\mathscr{I}^{n+1,L}U^{n+1,L}$, we project the field on 
piecewise-constant function on the Eulerian mesh, according to the integral formula
\begin{equation}
U_K^{n+1} = \frac{1}{|K|}\int_{K}  \mathscr{I}^{n+1,L}U^{n+1,L}(\bx)\,d\bx.
\label{eq:9}
\end{equation}
Practically, they are many ways to deal with the projection 
operation~\eqref{eq:9}. One can assemble elementary projection
contributions by computing the volume intersections between the 
reference mesh and the deformed mesh. But this procedure requires
the computation of \emph{all} the geometrical elements. Moreover, the projection
needs local tests of projection with conditional branching (think about the 
very different cases of compression, expansion, pure translation, etc). 
Thus the procedure 
is not SIMD and with potentially a loss of performance. 
The incremental remapping can also interpreted as a transport/advection
process, as already emphasized by Dukowicz and Baumgardner~\cite{Dukowicz2000}.
\subsection{Algebraic remapping}

Let us now write a different original formulation of the remapping process 
that does not explicitely requires the use of geometrical elements. 
In this step, there is no time evolution of any quantity, and in some sense
we have $\partial_t U=0$, that we rewrite
\[
\partial_t U = \partial_t U +\nabla\cdot(-\bv^{n+\demi} U)
\ +\ \nabla\cdot(\bv^{n+\demi} U) = 0.
\]
We decide to split up this equation into two substeps, a backward convection
and a forward one:
\begin{itemize}
	\item [i)] ~Backward convection: 
	\begin {equation}
	\partial_t U+ \nabla\cdot(-\bv^{n+\demi} U)=0.
	\label{eq:10}
\end{equation}
\item [ii)] ~Forward convection: 
\begin{equation}
\partial_t U+ \nabla\cdot(\bv^{n+\demi} U)=0.
\label{eq:11}
\end{equation}
\end{itemize}
Each convection problem is well-posed on the time interval $[0,\Delta t^n]$
under a standard CFL condition.
Let us now focus into these two steps and the way to solve them.
\subsubsection{Backward convection in Lagrangian description}
%
After the Lagrange step, if we solve the backward convection problem~\eqref{eq:4} 
over a time interval~$\Delta t^n$ using a Lagrangian description, we have
\begin{equation}
|K| (U_\ell)_K^{n,\star} =  |K^{n+1,L}| (U_\ell)_K^{n+1,L}. 
\label{eq:12}
\end{equation}
Actually, from the cell $K^{n+1,L}$ we go back to the original cell $K$ with
conservation of the conservative quantities. For $\ell=1$ (conservation of mass), 
we have
\[
|K|\, \rho_K^{n,\star} =  |K^{n+1,L}|\, \rho_K^{n+1,L}
\]
showing the variation of density by volume variation.
For $\ell=2,3,4$, it is easy to see that both velocity and specific total energy
are kept unchanged is this step:
\[
\bu^{n,\star} =  \bu^{n+1,L}, \quad 
E^{n,\star} =  E^{n+1,L}.
\]
Thus, this step is clearly computationally inexpensive.
\subsubsection{Forward convection in Eulerian description}
%
From the discrete field $(U_K^{n,\star})_K$ defined on the Eulerian cells $K$, 
we then solve the forward convection  problem~\label{eq:5} over a time step~$\Delta t^n$
under an Eulerian description. A standard Finite Volume discretization of the problem
will lead to the classical time advance scheme
\begin{equation}
U_K^{n+1} = U^{n,\star}_K - \frac{\Delta t^n}{|K|}\
\sum_{A\subset\partial_K} |A|\, U_A^{n+\demi,\star}\, (\bv_A^{n+\demi}\cdot \nu_A)
\label{eq:13}
\end{equation}
for some interface values $U_A^{n+\demi,\star}$ defined from the local neighbor values
$U^{n,\star}_K$. We finally get the expected Eulerian values $U_K^{n+1}$ at time $t^{n+1}$. \\[2ex]
Notice that from~\eqref{eq:6} and~\eqref{eq:7} we have also
\begin{equation}
|K|\,U_K^{n+1} = |K^{n+1,L}|\, U_K^{n+1,L} - \Delta t^n\
\sum_{A\subset\partial K} |A|\, U_A^{n+\demi,\star}\, (\bv_A^{n+\demi}\cdot \nu_A)
\label{eq:14}
\end{equation}
thus completely defining the remap step under the finite volume scheme form~\eqref{eq:14}.
Let us emphasize that we do not need any mesh intersection or geometric consideration to 
achieve the remapping process. The finite volume form~\eqref{eq:14} is now suitable
for a straightforward vectorized SIMD treatment. From~\eqref{eq:14} it is easy to achieve second-order accuracy for the remapping step by usual finite volume tools
(MUSCL reconstruction + second-order accurate time advance scheme for example).
\subsection{Full Lagrangian algebraic remapping time advance}
%
Let us note that the Lagrange+remap scheme is actually a conservative finite
volume scheme: putting~\eqref{eq:8} into~\eqref{eq:14} gives for all $\ell$:
\small
\begin{eqnarray}
(U_\ell)_K^{n+1} = (U_\ell)_K^n &-& \frac{\Delta t^n}{|K|}
\sum_{A^{n+\demi,L}\subset \partial K^{n+\demi,L}} |A^{n+\demi,L}|\, (\bpi_\ell)_A^{n+\demi,L}\cdot \nu_A^{n+\demi,L} 	\nonumber \\
&-& \frac{\Delta t^n}{|K|}\
\sum_{A\subset\partial K} |A|\, (U_\ell)_A^{n+\demi,\star}\, (\bv_A^{n+\demi}\cdot \nu_A)
\label{eq:15}
\end{eqnarray}
\normalsize
that can also be written
\small
\begin{eqnarray}
(U_\ell)_K^{n+1} = (U_\ell)_K^n &-& \frac{\Delta t^n}{|K|}
\sum_{A\subset\partial K}  |A|\,\left(\frac{|A^{n+\demi,L}|}{|A|}\, (\bpi_\ell)_A^{n+\demi,L}\cdot \nu_A^{n+\demi,L}\right) 	\nonumber \\
&-& \frac{\Delta t^n}{|K|}\
\sum_{A\subset\partial K}|A|\, \left( (U_\ell)_A^{n+\demi,\star}\, (\bv_A^{n+\demi}\cdot \nu_A)\right).
\label{eq:16}
\end{eqnarray}
\normalsize
We recognize into~\eqref{eq:16} pressure-related fluxes and convective 
numerical fluxes.
\subsection{Derivation of Lagrange-flux schemes}\label{sec:4}
%
From conclusions of the discussion above, we would like
to be free from any ``complex'' collocated Lagrangian solver involving complex geometric elements. Another difficult point is to define the deformation velocity field $\bv^{n+\demi}$ at time $t^{n+\demi}$, in an accurate manner. \medskip

In what follows, we are trying to deal with time accuracy in a different manner. 
Let us come back to the Lagrange+remap formula~\eqref{eq:16}. Let us consider
a ``small'' time step $t>0$ that fulfills the usual stability CFL condition
for explicit schemes. We have
\small
\begin{eqnarray*}
	(U_\ell)_K(t) = (U_\ell)_K^n &-& \frac{t}{|K|}
	\sum_{A\subset\partial K}  |A|\,\left(\frac{|A^L(t/2)|}{|A|}\, (\bpi_\ell)_A^{L}(t/2)\cdot \nu_A^{L}(t/2)\right) 	\nonumber \\
	&-& \frac{t}{|K|}\
	\sum_{A\subset\partial K}|A|\, (U_\ell)_A^{\star}(t/2)\, \bv_A(t/2)\cdot \nu_A.
\end{eqnarray*}
\normalsize
By making $t$ tend to zero, ($t>0$), we have $A^L(t/2)\rightarrow A$,
$(\bpi_\ell)^L(t/2)\rightarrow \bpi_\ell$,
$\bv(t/2)\rightarrow\bu$, $(U_\ell)^\star \rightarrow U_\ell$,
then we get a semi-discretization in space of the conservation laws. That can be seen as
a method-of-lines discretization (see~\cite{Schiesser1991}):
\small
\begin{equation}
\frac{d (U_\ell)_K}{dt} = - \frac{1}{|K|}
\sum_{A\subset\partial K}  |A|\, ((\bpi_\ell)_A\cdot \nu_A) 
-\frac{1}{|K|}\
\sum_{A\subset\partial K}|A|\, (U_\ell)_A\, (\bu_A\cdot\nu_A).
\label{eq:17}
\end{equation}
\normalsize
We get a classical finite volume method in the form
\[
\frac{dU_K}{dt} = - \frac{1}{|K|}\,\sum_{A\subset \partial K} |A| \ \Phi_A
\]
with a numerical flux $\Phi_A$ whose components are
\begin{equation}
(\Phi_\ell)_A =  (U_\ell)_A\, (\bu_A\cdot\nu_A) + (\bpi_\ell)_A\cdot \nu_A. 
\label{eq:18}
\end{equation}
In~\eqref{eq:17}, pressure fluxes $(\pi_\ell)_A$ and interface normal velocities $(\bu_A\cdot\bnu_A)$
can be computed from an approximate Riemann solver in Lagrangian coordinates
(for example the Lagrangian HLL solver, see Toro~\cite{Toro} for example). Then, the interface
states $(U_\ell)_A$ can be computed from a upwind process according to the sign
of the normal velocity $(\bu_A\cdot\nu_A)$. To get higher-order accuracy
in space, one can
use a standard MUSCL reconstruction + slope limiting process. At this stage,
because there is no time discretization, all acts on the Eulerian mesh and fluxes
are defined at the edges the the Eulerian cells. \medskip

To get high-order accuracy in time, one can then apply a 
standard high-order time advance scheme (Runge-Kutta~2, etc.). For the second-order Heun scheme
for example, we have the following algorithm:
\begin{enumerate}
	\item Compute the time step $\Delta t^n$ subject to some CFL condition;
	\item \textbf{Predictor step}. MUSCL reconstruction + slope limitation: from the discrete values~$U_K^n$,
	compute a discrete gradient for each cell $K$.
	\item Use a Lagrangian approximate Riemann solver to compute pressure fluxes $\bpi_A^n$ and interface velocities $\bu_A^n$
	\item Compute the upwind edge values $(U_\ell)_A^n$ according to the sign of
	$(\bu_A^n\cdot \nu_A)$;
	\item Compute the numerical flux $\Phi_A^n$ as defined in~\eqref{eq:12};
	\item Compute the first order predicted states $U_{K}^{\star,n+1}$:
	\[
	U_K^{\star,n+1} = U_K^n - \frac{\Delta t^n}{|K|}\sum_{A\subset \partial K}
	|A| \ \Phi_A^n
	\]
	\item \textbf{Corrector step}. MUSCL reconstruction + slope limitation: from the discrete values~$U_K^{\star,n+1}$,
	compute a discrete gradient for each cell $K$.
	\item Use a Lagrangian approximate Riemann solver to compute pressure fluxes $\bpi_A^{\star,n+1}$ and interface velocities $\bu_A^{\star,n+1}$
	\item Compute the upwind edge values $(U_\ell)_A^{\star,n+1}$
	according to the sign of $(\bu_A^{\star,n+1}\cdot \nu_A)$;
	\item Compute the numerical flux $\Phi_A^{\star,n+1}$ as defined in~\eqref{eq:12};
	\item Compute the second-order accurate states $U_{K}^{n+1}$ at time $t^{n+1}$:
	\[
	U_K^{n+1} = U_K^n - \frac{\Delta t^n}{|K|}\sum_{A\subset \partial K}
	|A|\ \frac{\Phi_A^n + \Phi_A^{\star,n+1}}{2}.
	\]
\end{enumerate}
One can appreciate the simplicity of the numerical solver.
\subsection{Details on the Lagrangian HLL approximate solver}
%
A HLL approximate Riemann solver \cite{Toro} in Lagrangian coordinates
can be used to easily
compute interface pressure and velocity. For a local Riemann problem made
of a left state~$U_L$ and a right state $U_R$, the contact pressure $p^\star$
is given by the formula
\begin{equation}
p^\star = \frac{\rho_R p_L + \rho_L p_R}{\rho_L+\rho_R}
- \frac{\rho_L\rho_R}{\rho_L+\rho_R}\, \max(c_L,c_R)\, (u_R-u_L),
\label{eq:19}
\end{equation}
and the normal contact velocity $u^\star$ by
\begin{equation}
u^\star =  \frac{\rho_L u_L + \rho_R u_R}{\rho_L+\rho_R}
- \frac{1}{\rho_L+\rho_R}\, \frac{p_R-p_L}{\max(c_L,c_R)}
\label{eq:20}
\end{equation}
leading to very simple operations.
\subsection{One dimensional numerical example}
%
\paragraph{Shock tube problems.} An as example, we test the Lagrange-flux scheme presented in section~\ref{sec:4} 
on the reference one-dimensional
Sod shock tube problem \cite{Sod71}. We use a Runge-Kutta 2 (RK2) time integrator and
a MUSCL reconstruction using the second-order Sweby slope limiter~\cite{Sweby1984} 
\[
\phi(a,b) = (ab>0) \ \text{sign}(a) \ \max\big( 
\min(|a|,\beta |b|),\ \min(\beta |a|, |b|)\big)
\]
with  coefficient $\beta=1.5$. We  use a uniform grid made of 384 points. 
The final time is $T=0.23$ and a CFL number equal to 0.25\,.
On figure~\ref{fig:sod}, one can observe a good behaviour of the
Eulerian solver, with rather sharp discontinuities and low numerical diffusion
into rarefaction fans.
\begin{figure*}
	\centering\includegraphics[height=0.16\textheight]{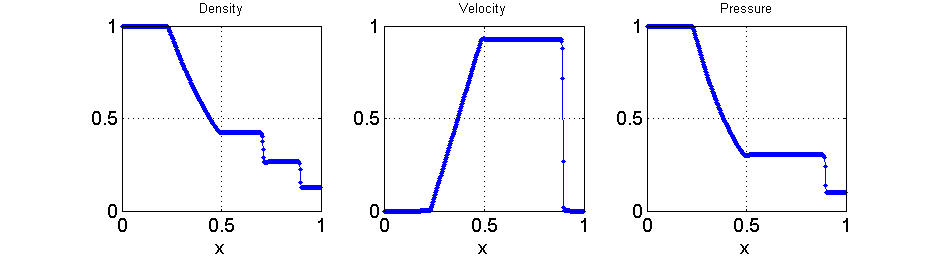}
	\caption{Second-order Lagrange-flux scheme on reference Sod's 1D shock tube problem. Time is $T=0.23$, 384 mesh points. Here Sweby's slope limiter
		with coefficient~1.5 is used.} \label{fig:sod}
\end{figure*}
\section{Derivation of Lagrange-flux scheme for the Saint-Venant equations}
%
This section is dedicated to the derivation of suitable Lagrange-flux
schemes for the Saint-Venant equations with gravity source terms.
\subsection{Requirements}
%
\subsubsection{Well-balanced property} 
For smooth solutions the momentum balance equation can be written
\[
\rho D_t\bu + \nabla\left(g(h+z)\right) = 0.
\]
A particular solution of the Saint Venant equations is the so-called ``lake-at-rest'' state, meaning that a fluid at rest ($\bu=0$ everywhere)
with topography $z$ satisfies the condition
\begin{equation}
h + z = C
\label{eq:21}
\end{equation}
for each connected part of the fluid domain. It is expected that numerical solver
also satisfies the fluid-at-rest condition at the discrete level, meaning that
\[
(h+z)_K = C
\]
for all finite volumes $K$ belonging to the same connected part of the discrete
domain~$\Omega^h$. This condition is name the ``well-balanced'' property. We are looking
for Lagrange-flux schemes that satisfy the well-balance property.
\subsubsection{Dealing with small water depth or wet-dry transitions}
For debris dynamics and flooding applications, it is particularly important to have a robust numerical scheme able to deal with small water depth and/or to handle
wet-dry state transitions. Rather than considering wet-dry interface reconstruction
techniques, for performance purpose we have decided use an interface capturing scheme that computes the variables in all the wet+dry domain. That means that
the numerical solver has to deal with vanishing water depth, see $h=0$ exactly. There
are three computational issues: first we have to ensure positivity of the water depth; Secondly, when $h$ vanishes, we have also to deal with vanishing
propagation speed $c=\sqrt{gh}$. Finally, because conservative variables are $h$ and $h\bu$ respectively, the fluid
velocity in ordinarily computed as
\[
\bu := \frac{h\bu}{h}.
\]
Of course this leads to ill-posed division operations for vanishing depths. Moreover, the division has no sense when $h=0$ exactly and something different
has to be done to compute the velocity.
\subsection{General form of the finite volume scheme}
For the sake of simplicity, only first-order accurate schemes are considered
in this document. We use standard finite volume notations with $K$ for a generic finite volume, $A$ for a generic edge, $\bnu_A$ a normal unit vector at edge $A$,
$\Delta t^n$ the time step at time $t^n$. 
For Saint Venant equations, we have looking for finite volumes explicit 
schemes in the form
\begin{eqnarray}
&& h_K^{n+1} = h_K^n - \frac{\Delta t^n}{|K|}\, \sum_{A\subset \partial K}
|A|\, \Phi_h(U_K^n, U_{K_A}^n,\bnu_A), 
\label{eq:22}\\ [1.1ex]
&& (h\bu)_K^{n+1} = (h\bu)_K^n - \frac{\Delta t^n}{|K|}\, \sum_{A\subset \partial K}
|A|\, \Phi_{hu}(U_K^n, U_{K_A}^n,\bnu_A)
- \frac{\Delta t^n}{|K|}\, \sum_{A\subset \partial K} |A|\, p_A^n \bnu_A
\nonumber \\ [1.1ex]
&& \phantom{(h\bu)_K^{n+1} = (h\bu)_K^n } 
- \frac{\Delta t^n}{|K|}\,  g\, \bar h_K^n \,\sum_{A\subset \partial K} |A|\, z_A \bnu_A,
\label{eq:23}
\end{eqnarray}
where $\Phi_h$ represents the numerical mass flux, $\Phi_{hu}$ represents
the convective flux related to the momentum variable, $p_A$ (resp. $z_A$) is the pressure flux (resp. topography value) at the edge $A$. In~\eqref{eq:23}, the last
term is a finite volume discretization of the source term $-\Delta t^n gh\nabla z$
into cell $K$.

To entirely define the numerical scheme, we have to characterize both convective
fluxes, pressure flux as well as the values for $z_A$ and $\bar h_K^n$.

The Lagrange-flux scheme methodology exposed above explains how to discretize convective fluxes and pressure fluxes as soon as an approximate Riemann solver is determined for the system. The choice for~$\bar h_K^n$ and~$z_A$ is guided by the
well-balanced property. This is developed in the next section. 
%
%
\subsection{Lagrangian approximate Riemann solver and well-balanced property}
%
For simplicity purposes, we consider in this section the one-dimensional Saint-Venant equations. We have first to achieve an approximate Riemann solver that
takes into account the gravity source term effect. Let us rewrite the momentum equation:
\[
\partial_t (hu) + \partial_x(h u^2) + \partial_x p = -gh \partial_x z
\]
with $p=p(h)=g \dfrac{h^2}{2}$ as usual. For a lake-at-rest solution, we have
\[
\partial_x p + hg \partial_x z = h(h+z)\partial_x (h+z)
-gz \partial_x (h+z) = \partial_x \left[g\frac{(h+z)^2}{2}\right]
-g\,z\, \partial_x (h+z) = 0.
\]
Let us now consider a Riemann problem made of two constant states 
$U_L=(h_L,(hu)_L)$ and $U_R=(h_R, (hu)_R)$ with a discontinuous topography
function with left and right values $z_L$ and $z_R$. 
In this section, we do not consider dry conditions, i.e. we assume that $h_L, h_R>0$.
Let us denote $z^\star$ a local mean topography value, function of both $z_L$ and $z_R$.
The replace the initial
momentum balance equation of the local conservation law
\begin{equation}
\partial_t (hu) + \partial_x (h u^2) + \partial_x \Pi = 0
\label{eq:24}
\end{equation}
with the pseudo pressure $\Pi$ defined by
\begin{equation}
\Pi = \Pi(h,z) = g \frac{(h+z)^2}{2} - g z^\star (h+z).
\label{eq:25}
\end{equation}
Let us emphasize that the modified system still has ``lake-at-rest'' solutions
with constant $\Pi$ giving the property $h+z=C$.

We retrieve a system of conservation laws that is similar to the isentropic Euler
equations with a pseudo pressure defined by~\eqref{eq:25}, and a propagation speed $\tilde c$
such that
\begin{equation}
\tilde c^2 = \frac{\partial \Pi}{\partial h} = gh + g (z-z^\star).
\label{eq:26}
\end{equation}
It is important to notice that $g (z-z^\star)$ has no sign, so this local approximate
problem has only a sense if involved depths are such that $h\geq z^\star-z$. The Lagrangian form of the equation writes
\begin{eqnarray*}
&& h D_t \tau - \partial_x u = 0, \\ [1.1ex]
&& h D_t u +\partial_x \Pi = 0
\end{eqnarray*}
with $\tau=h^{-1}$.
Introducing the mass variable $m$ such that $dm = h dx$, we have the conservative
script 
\begin{eqnarray*}
&& D_t \tau - \partial_m u = 0, \\ [1.1ex]
&& D_t u +\partial_m \Pi = 0.
\end{eqnarray*}
We now introduce an approximate Riemann solver made of two waves of respective
speeds $a_L$ and $a_R$ and intermediate state with depth $h^\star$ and velocity $u^\star$. By writing the Rankine-Hugoniot jump conditions on the second equation
for the two waves, we have the two relations
\begin{eqnarray}
&& a_L (u^\star-u_L) + \Pi_\star-\Pi_L = 0, \label{eq:27} \\ [1.1ex]
&& a_R (u_R-u^\star) + \Pi_R - \Pi^\star = 0 \label{eq:28}.
\end{eqnarray}
The ``acoustic solver'' approximation consider $a_L=-\sigma h_L$ and
$a_R = \sigma h_R$ for a propagation speed $\sigma$. The sub-characteristic
conditions requires that $\sigma\geq \max(c_L,c_R)$. From~\eqref{eq:27} 
and~\eqref{eq:28} we get the mean velocity
\begin{equation}
u^\star = \frac{h_L u_R + h_R u_R}{h_L + h_R} - \frac{1}{2\sigma}
\frac{2}{h_L+h_R} \left(\Pi_R-\Pi_L\right).
\label{eq:29}
\end{equation}
There is a convenient choice for $z^\star$ that gives an interesting 
and simple value for the difference of pseudo-pressure $\Pi_R-\Pi_L$:
\begin{proposition}\label{prop:1}
For $z^\star = \frac{1}{2}(z_L+z_R)$, we have
\[
\Pi_R - \Pi_L = g \frac{h_L+h_R}{2} \left[(h+z)_R-(h+z)_L\right].
\]
\end{proposition}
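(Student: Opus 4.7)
The proof is a direct algebraic computation from the definition~\eqref{eq:25}, so the plan is simply to expand $\Pi_R-\Pi_L$ and simplify using the specific choice $z^\star=\tfrac{1}{2}(z_L+z_R)$. To keep the bookkeeping light I would introduce the auxiliary notation $a=h_L+z_L$ and $b=h_R+z_R$, so that the definition of the pseudo-pressure takes the compact form $\Pi_L=\tfrac{g}{2}a^2-gz^\star a$ and $\Pi_R=\tfrac{g}{2}b^2-gz^\star b$.

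The first step is then to form the difference and factor the difference of squares:
\[
\Pi_R-\Pi_L \;=\; \tfrac{g}{2}(b^2-a^2)-gz^\star(b-a) \;=\; g(b-a)\left[\tfrac{a+b}{2}-z^\star\right].
\]
The second step is to plug in the symmetric choice $z^\star=\tfrac{1}{2}(z_L+z_R)$ and observe the cancellation of the topography contributions:
\[
\tfrac{a+b}{2}-z^\star \;=\; \tfrac{(h_L+z_L)+(h_R+z_R)}{2}-\tfrac{z_L+z_R}{2} \;=\; \tfrac{h_L+h_R}{2}.
\]
Reverting to the original notation $b-a=(h+z)_R-(h+z)_L$ then gives the claimed identity.

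There is no real obstacle here; the proof is essentially one line, and the only point that deserves a sentence of commentary in the write-up is \emph{why} this particular $z^\star$ is the ``right'' choice. It is precisely the value for which the term $\tfrac{a+b}{2}-z^\star$ reduces to a pure water-depth average, free of any topography dependence. This is exactly what makes the jump $\Pi_R-\Pi_L$ proportional to the jump of $h+z$, and hence guarantees that lake-at-rest data ($h+z=\mathrm{const}$) produce $\Pi_R-\Pi_L=0$, which is in turn what will be needed to inherit the well-balanced property at the level of the acoustic solver~\eqref{eq:29}. I would therefore state the proposition, give the two-line calculation above, and close with a short remark on this well-balanced motivation so the reader sees why Proposition~\ref{prop:1} is invoked rather than merely verified.
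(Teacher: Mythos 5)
Your proof is correct, and it is organized differently from the paper's. The paper expands $\Pi_R-\Pi_L$ into the pressure difference $p_R-p_L$ plus cross terms $g((hz)_R-(hz)_L)$, $-gz^\star(h_R-h_L)$ and a $z^2$ contribution, then invokes the identity $p_R-p_L=g\tfrac{h_L+h_R}{2}(h_R-h_L)$ and recombines (the final recombination is left to the reader, and the displayed intermediate line contains a couple of sign/index typos). You instead treat $w=h+z$ as the single natural variable of $\Pi$, so that $\Pi=\tfrac{g}{2}w^2-gz^\star w$ and the difference factors at once as $g(w_R-w_L)\bigl(\tfrac{w_L+w_R}{2}-z^\star\bigr)$; the choice $z^\star=\tfrac12(z_L+z_R)$ then visibly strips the topography out of the second factor. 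Your route is shorter, avoids the error-prone term-by-term bookkeeping, and makes transparent \emph{why} this $z^\star$ is the right one --- which the paper only exploits implicitly in Propositions~\ref{prop:2}--\ref{prop:4}. Both arguments are elementary and prove the same identity; yours would be the preferable write-up.
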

\begin{proof} It is easy to check that
\begin{eqnarray*}
\Pi_R-\Pi_L &=& p_R-p_R + g h_R(z_R-z^\star) - g h_L (z_L+z^\star)
+ g \left(\frac{z_L+z_R}{2}-z^\star\right) (z_R-z_L) \\ 
&=& p_R-p_L -g z^\star (h_R-h_L) + g((hz)_R-(hz)_L) + g \left(\frac{z_L+z_R}{2}- z^\star\right) (z_R-z_L)
\end{eqnarray*}
and, because $p_R-p_L = g\frac{h_L+h_R}{2}(h_R-h_L)$, one finds the result.
\end{proof}
Following the choice from proposition~\ref{prop:1}, we have the formula 
for $u^\star$:
\begin{equation}
u^\star = \frac{h_L u_R + h_R u_R}{h_L + h_R} - \frac{1}{2\sigma}
g\, \left[(h+z)_R-(h+z)_L\right].
\label{eq:30}
\end{equation}
We then have the trivial result:
\begin{proposition}\label{prop:2}
For lake-at-rest conditions, i.e. $u_L=u_R=0$ and $(h+z)_L=(h+z)_R$, we have $u^\star=0$.
\end{proposition}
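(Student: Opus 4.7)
The plan is essentially to substitute the lake-at-rest hypotheses directly into the explicit formula~\eqref{eq:30} derived just above, which already packages all the structural work (the well-balanced choice $z^\star=(z_L+z_R)/2$ and Proposition~\ref{prop:1}) into a clean expression for $u^\star$. Since the statement is a corollary of that formula, no new Riemann-solver analysis or Rankine--Hugoniot manipulation is required.

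Concretely, I would proceed in two short steps. First, I would invoke the assumption $u_L=u_R=0$ to observe that the convective average $\tfrac{h_L u_L + h_R u_R}{h_L+h_R}$ vanishes identically (and note in passing that what appears in~\eqref{eq:30} as $h_L u_R + h_R u_R$ is the obvious typo for $h_L u_L + h_R u_R$, so the substitution is unambiguous). Second, I would invoke the hypothesis $(h+z)_L=(h+z)_R$ to conclude that the pressure-jump correction $-\tfrac{g}{2\sigma}\bigl[(h+z)_R-(h+z)_L\bigr]$ also vanishes. Adding the two zero contributions yields $u^\star=0$.

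There is essentially no obstacle: the whole content of the proposition has already been absorbed into the derivation of~\eqref{eq:30} via Proposition~\ref{prop:1}, which rewrote $\Pi_R-\Pi_L$ proportionally to the jump in the free surface $h+z$. The only thing worth mentioning, to make the writeup self-contained, is that the denominator $h_L+h_R$ in~\eqref{eq:30} is strictly positive under the standing non-dry assumption $h_L,h_R>0$ of this subsection, so the expression is well-defined and the two vanishing contributions can indeed be summed without further caveat. If one wanted to be completist, one could also remark that the intermediate depth $h^\star$ then coincides with the common value of $h_L$ and $h_R$ via~\eqref{eq:27}--\eqref{eq:28}, but this is not required by the statement.
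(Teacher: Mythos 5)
Your proof is correct and follows exactly the route the paper intends: the paper labels Proposition~\ref{prop:2} a ``trivial result'' precisely because it is an immediate substitution of $u_L=u_R=0$ and $(h+z)_L=(h+z)_R$ into formula~\eqref{eq:30}, which is what you do. Your remarks on the typo in~\eqref{eq:30} and on the positivity of $h_L+h_R$ are accurate and only add rigor.
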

As a direct consequence of Proposition~\ref{prop:2}, for lake-at-rest conditions, 
convective fluxes will be zero.

Let us go now to the computation of the mean depth $h^\star$. By writing the 
Rankine-Hugoniot jump conditions on the first equation for the two waves,
we have the relations
\begin{eqnarray*}
&& \phantom{-}h_L \sigma (\tau^\star - \tau_L) = u^\star-u_L, \\ [1.1ex]
&& -h_R \sigma (\tau_R -\tau^\star) = u_R - u^\star,
\end{eqnarray*}
thus giving
\[
\tau^\star = \frac{2}{h_L+h_R} + \frac{1}{2\sigma}\, \frac{2}{h_L+h_R}(u_R-u_L)
\]
or equivalently
\begin{equation}
h^\star = \frac{h_L+h_R}{2}\, \frac{1}{1+\frac{1}{2\sigma} (u_R-u_L)}.
\label{eq:31}
\end{equation}
Then the mean pressure $p^\star$ is simply computed as $p^\star = g\dfrac{(h^\star)^2}{2}$. We have the following well-balanced property~:
\begin{proposition}\label{prop:3}
Choosing $z^\star=\frac{1}{2}(z_L+z_R)$, for lake-at-rest conditions $u_L=u_R=0$
and $(h+z)_L=(h+z)_R$, we have $h^\star = \dfrac{h_L+h_R}{2}$ so that 
\[
h^\star+z^\star = (h+z)_L = (h+z)_R.
\]
\end{proposition}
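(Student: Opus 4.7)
The plan is to verify the claim as a direct consequence of formula~\eqref{eq:31} combined with the specific choice $z^\star=\tfrac{1}{2}(z_L+z_R)$, so no new machinery is required. The argument splits naturally into two very short steps: first computing $h^\star$ under the lake-at-rest hypothesis, then adding $z^\star$ and using the equality of the total heads.

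First I would substitute $u_L=u_R=0$ into the expression~\eqref{eq:31} for $h^\star$. The denominator $1+\frac{1}{2\sigma}(u_R-u_L)$ collapses to $1$, leaving immediately
\[
h^\star = \frac{h_L+h_R}{2}.
\]
This already yields the first assertion of the proposition, and the reduction does not depend on the value of $\sigma$, which is reassuring: the well-balanced property should not rely on a specific choice of the acoustic speed.

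Next I would add the chosen $z^\star$ to this $h^\star$ and simply reorganize the sum:
\[
h^\star + z^\star = \frac{h_L+h_R}{2} + \frac{z_L+z_R}{2} = \frac{(h+z)_L + (h+z)_R}{2}.
\]
Invoking the lake-at-rest hypothesis $(h+z)_L=(h+z)_R$, the right-hand side coincides with either common value, which closes the argument.

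I do not anticipate any real obstacle here, since the proposition is essentially a consistency check on the formulas derived previously. The only subtlety worth pointing out in the write-up is that the result requires \emph{both} the specific form~\eqref{eq:31} of $h^\star$ (obtained from the Rankine-Hugoniot relations on the mass equation) and the centered choice $z^\star=\tfrac{1}{2}(z_L+z_R)$ from Proposition~\ref{prop:1}; any other choice of $z^\star$ would in general spoil the identity $h^\star+z^\star = (h+z)_L$, even though $h^\star$ itself would still reduce to $\tfrac{1}{2}(h_L+h_R)$ under $u_L=u_R=0$. Combining this proposition with Proposition~\ref{prop:2} (which gives $u^\star=0$ and hence vanishing convective fluxes) will later deliver the full discrete well-balanced property for the scheme~\eqref{eq:22}--\eqref{eq:23}.
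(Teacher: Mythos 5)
Your argument is correct and is exactly the one the paper intends (the paper in fact states Proposition~\ref{prop:3} without a written proof, treating it as immediate): setting $u_L=u_R=0$ in~\eqref{eq:31} gives $h^\star=\tfrac{1}{2}(h_L+h_R)$, and adding $z^\star=\tfrac{1}{2}(z_L+z_R)$ together with $(h+z)_L=(h+z)_R$ yields the identity. Your side observation --- that $h^\star$ is independent of $z^\star$ while the final identity genuinely requires the centered choice --- is accurate and a worthwhile clarification.
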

\begin{remark}
In~\eqref{eq:31}, one can notice a ``compressibility'' term
\[
\kappa = 1 + \frac{1}{2\sigma}(u_R-u_L).
\]
For $u_R-u_L<0$ which expresses local compression conditions, in order
to keep $\kappa$ positive, we need to choose $\sigma$ such that
\[
\sigma > \frac{|u_R-u_L|}{2}.
\] 
On can choose for example $\sigma \leftarrow \max(\sigma, -\min(0, u_R-u_L))$.
\end{remark}
\subsection{Discretization of the source term and well-balanced property}
%
Now we discuss the discretization of the gravity source term. For one-dimensional
problems and a uniform mesh grid, the numerical scheme writes
\begin{equation}
(hu)_j^{n+1} = (hu)_j^n - \frac{\Delta t}{\Delta x} \left(\Phi_{hu,j+1/2}^n
-\Phi_{hu,j-1/2}^n \right) - \frac{\Delta t}{\Delta x} 
\left(p_{j+1/2}^n-p_{j-1/2}^n \right) - \frac{\Delta t}{\Delta x} g\bar h_j^n
\left( z_{j+1/2}-z_{j-1/2}\right),
\label{eq:32}
\end{equation}
where the $\Phi_{hu,j+1/2}^n$ are the momentum convective fluxes and the
$p_{j+1/2}^n$ are the pressure at cell interfaces, prescribed by the
approximate Riemann solver. In~\eqref{eq:32}, we have still to find a convenient
choice for both quantities $\bar h_j^n$ and interface topography $z_{j+1/2}$.
This choice is guided by the following result:
\begin{proposition}\label{prop:4}
Let us consider approximate Riemann values $u^\star$ and $h^\star$ given
by the formulas~\eqref{eq:30} and~\eqref{eq:31} respectively. Then, the
initial lake-in-rest conditions, i.e., $u_j=0$ and $(h+z)^0_j = c$ for all $j$,
then the choice
\begin{equation}
z_{j+1/2} = \frac{z_j+z_{j+1}}{2}, \quad 
\bar h_j^n = \frac{h_{j-1/2}^n + h_{j+1/2}^n}{2}
\label{eq:33}
\end{equation}
provides a well-balanced numerical scheme, i.e. $(h+z)_j^n = C$ for all $j$, 
for all $n\in\mathbb{N}$.
\end{proposition}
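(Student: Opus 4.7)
The plan is a proof by induction on the time index $n$. At step $n$, we assume the discrete lake-at-rest state, namely $u_j^n=0$ and $(h+z)_j^n=C$ for every $j$, and we check that the update formulas preserve both equalities at time $n+1$.

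First I would look at every interface $j+\tfrac12$ and feed the left/right data $U_j^n,\ U_{j+1}^n$ together with $z_j,z_{j+1}$ into the approximate Riemann solver of the previous subsection. Because $(h+z)_j=(h+z)_{j+1}=C$ and both velocities vanish, Proposition~\ref{prop:2} immediately gives $u^\star_{j+1/2}=0$ and Proposition~\ref{prop:3} gives $h^\star_{j+1/2}=\tfrac12(h_j^n+h_{j+1}^n)$. Consequently the interface pressure reduces to $p_{j+1/2}^n=\tfrac12 g(h^\star_{j+1/2})^2=\tfrac12 g(C-z_{j+1/2})^2$ using the choice $z_{j+1/2}=\tfrac12(z_j+z_{j+1})$, since $h^\star_{j+1/2}+z_{j+1/2}=C$.

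Next I would handle the mass equation and the convective momentum flux. Both numerical fluxes $\Phi_h$ and $\Phi_{hu}$ of~\eqref{eq:22}--\eqref{eq:23} are built by upwinding on the sign of the interface normal velocity $u^\star$; since $u^\star_{j\pm 1/2}=0$, these convective fluxes vanish identically. Therefore the mass update reduces to $h_j^{n+1}=h_j^n$, which already gives $(h+z)_j^{n+1}=C$ provided the momentum update preserves $u_j^{n+1}=0$.

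The remaining step, and the only nontrivial computation, is showing that the pressure gradient term exactly cancels the discrete gravity source, i.e.
\begin{equation*}
p_{j+1/2}^n-p_{j-1/2}^n + g\,\bar h_j^n\,(z_{j+1/2}-z_{j-1/2}) = 0.
\end{equation*}
Using $p_{j\pm 1/2}=\tfrac12 g(C-z_{j\pm 1/2})^2$ and factoring the difference of squares,
\begin{equation*}
p_{j+1/2}^n-p_{j-1/2}^n = -\tfrac12 g\bigl((C-z_{j+1/2})+(C-z_{j-1/2})\bigr)(z_{j+1/2}-z_{j-1/2}) = -g\,\tfrac{h^\star_{j-1/2}+h^\star_{j+1/2}}{2}\,(z_{j+1/2}-z_{j-1/2}),
\end{equation*}
which matches the source term precisely because the prescription~\eqref{eq:33} sets $\bar h_j^n=\tfrac12(h_{j-1/2}^n+h_{j+1/2}^n)$. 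Hence $(hu)_j^{n+1}=0$, closing the induction.

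The delicate point, and the one I would state carefully, is the interpretation of $h_{j\pm 1/2}^n$ in~\eqref{eq:33}: it must be read as the Riemann-solver interface value $h^\star_{j\pm 1/2}$ computed with the centered $z^\star$, because that is exactly what makes the telescoping identity above an identity and not merely an $O(\Delta x)$ balance. Everything else (conservation of $h_j$, vanishing convective fluxes, vanishing $u^\star$) is an immediate consequence of Propositions~\ref{prop:2} and~\ref{prop:3}, so the proof is essentially a one-line algebraic verification once the right averages are plugged in.
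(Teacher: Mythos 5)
Your proof is correct and follows essentially the same route as the paper's: Proposition~\ref{prop:2} kills the convective fluxes, and the difference-of-squares factorization of $p_{j+1/2}^n-p_{j-1/2}^n$ combined with the choice of $\bar h_j^n$ and Proposition~\ref{prop:3} makes the pressure gradient cancel the discrete source exactly. Your closing remark that $h_{j\pm 1/2}^n$ must be read as the Riemann-solver interface value $h^\star_{j\pm1/2}$ is a correct clarification of something the paper leaves implicit, but it does not change the argument.
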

\begin{proof}
According to Proposition~\ref{prop:2}, the approximate Riemann solver returns
a null velocity value $u^\star$ under lake-in-rest conditions. Thus, convective
fluxes are zero. Let us assume that we have lake-in-rest conditions at time $t^n$.
Then we have
\begin{eqnarray*}
(hu)_j^{n+1} &=& - \frac{\Delta t}{\Delta x} 
\left(p_{j+1/2}^n-p_{j-1/2}^n -  g\bar h_j^n
\left( z_{j+1/2}-z_{j-1/2} \right)\right)  \\ [1.1ex]
\phantom{(hu)_j^{n+1}} &=& - \frac{\Delta t}{\Delta x} 
\left( g\left(\frac{(h_{j+1/2}^n)^2}{2}-\frac{(h_{j-1/2}^n)^2}{2} \right)-  g\bar h_j^n \left( z_{j+1/2}-z_{j-1/2} \right)\right) \\ [1.1ex]
\phantom{(hu)_j^{n+1}} &=& - \frac{\Delta t}{\Delta x} 
\left( g \frac{h_{j-1/2}^n + h_{j+1/2}^n}{2}\left(h_{j+1/2}^n-h_{j-1/2}^n \right)-  g\bar h_j^n \left( z_{j+1/2}-z_{j-1/2} \right)\right) 
\end{eqnarray*}
Then for $\bar h_j^n$ defined in~\eqref{eq:33}, we have
\[
(hu)_j^{n+1} = - \frac{\Delta t}{\Delta x} 
 g \frac{h_{j-1/2}^n + h_{j+1/2}^n}{2}
\left( (h_{j+1/2}^n+z_{j+1/2})-(h_{j-1/2}^n-z_{j-1/2})\right).
\]
According the Proposition~\ref{prop:3}, the right hand side is 0.
\end{proof}
\subsection{Dealing with dry-wet phase transitions}
%
As mentioned above, when $h$ vanishes, we have to deal with vanishing
propagation speed $c=\sqrt{gh}$ and by the fact that approximate Riemann 
problem quantities may leads to some divisions by zero, making the computational
method unstable. Moreover, the division
\begin{equation}
\bu := \frac{h\bu}{h}
\label{eq:34}
\end{equation}
required to compute the velocity may be arbitrary ill-conditioned for~$h$
close to zero. This generally produces spuriously large velocities at
wet-dry transition fronts, making the time step collapse or simply leading to 
a code crash. This section addresses these issues. \medskip

Let us first clarify some behavior of the solutions at vanishing water depth $h\rightarrow 0$ ($h\neq 0$). For smooth solutions, the momentum equation can be written
\[
h D_t \bu + g h\nabla (h+z) = 0.
\]
Dividing by $h$, we find the Burgers-like equation
\[
\partial_t \bu + \bu\cdot \nabla\bu + g \nabla(h+z) = 0.
\]
Neglecting $h$ in the equation gives
\begin{equation}
\partial_t \bu + \bu\cdot \nabla\bu = -g\nabla z.
\label{eq:35}
\end{equation}
We find out an autonomous equation in $\bu$. We want to insist on the fact that,
at vanishing $h$, the value of velocity is not arbitrary, but solution of
the above equation which is subject to gravity acceleration. In particular,
$\bu$ has not to be zero as often considered and encountered in the literature.
This is because friction terms are not taken into account so that fluid
is sliding on the ground and accelerating according to gravity forces.
Let us finally remark that equation~\eqref{eq:35} can be written in conservative form
by considering any function $\eta>0$  solution of the conservation law
\begin{equation}
\partial_t \eta + \nabla\cdot(\eta\bu) = 0.
\label{eq:36}
\end{equation}
Then we can rewrite~\eqref{eq:35} as
\begin{equation}
\partial_t(\eta \bu) + \nabla\cdot(\eta \bu\otimes\bu) = -g\eta \nabla z.
\label{eq:37}
\end{equation}
\subsubsection{Propagation speed and approximate Riemann solver}
%
Let us recall the interface velocity and water depth obtained with our Lagrangian
approximate Riemann solver, for ``wet-wet'' conditions:
\begin{eqnarray*}
&& u^\star = \frac{h_L u_R + h_R u_R}{h_L + h_R} - \frac{1}{2\sigma}
g\, \left[(h+z)_R-(h+z)_L\right], \\ [1.1ex]
&& h^\star = \frac{h_L+h_R}{2}\, \frac{1}{1+\frac{1}{2\sigma} (u_R-u_L)}
\end{eqnarray*}
with a sub-characteristic propagation speed $\sigma$ chosen for example as $\sigma = \max\left(c_L,c_R,-(u_R-u_L)_-\right)$. There are two difficulties: either
$(h_L+h_R)$ is small or $\sigma$ is small, what may lead to some ill-conditioned divisions. A way to fix the problem of ``speed of sound'' is to use a larger
non-vanishing propagation speed~$\sigma$. This can be theoretically justified by a relaxation technique~\cite{jin,suliciu}. \medskip

Let us remark that the expression for $u^\star$ still holds for $h_L h_R=0$ but
$h_L+h_R>0$. The remaining difficulty is the ``dry-dry'' case or vanishing $h$
on both two sides. In this case we have to use something different. As mentioned above, at least for smooth solutions the momentum equation can be rewritten
as
\[
D_t u + \partial_x(g(h+z)) = 0.
\]
Looking for an approximate Riemann solver considering this equation leads to
the mean velocity
\begin{equation}
u^\star = \frac{u_L+u_R}{2}- \frac{1}{2\sigma} g\, \left[(h+z)_R-(h+z)_L\right]
\label{eq:38}
\end{equation}
or
\begin{equation}
u^\star = \frac{u_L+u_R}{2}- \frac{1}{2\sigma} g\, \left(z_R-z_L\right)
\label{eq:39}
\end{equation}
if $h_L$ and $h_R$ are neglected. From the practical computational point of view, we have to
define a threshold on $h_L+h_R$ to use either~\eqref{eq:30} or~\eqref{eq:39} for 
computing~$u^\star$. Remark that expression~\eqref{eq:39} can be seen as a particular
case of~\eqref{eq:30} with $h_L=h_R=1$.
%
%
%
\subsubsection{Determination of the velocity vector}
%
The division operation $\bu:=(h\bu)/h$ cannot be used for pure dry conditions
($h=0$) or vanishing water depth $(h\ll 1)$. We have to do something different.
In this work, we propose to use an additional ``dry velocity'' equation. Let
us denote $\bu^{dry}$ the ``dry velocity''. It is solution of the transport-relaxation
equation
\begin{equation}
\partial_t \bu^{dry} + \bu^{dry}\cdot\nabla\bu^{dry} = 
-g\nabla z + \frac{\bu-\bu^{dry}}{\mu}
\label{eq:40}
\end{equation}
with $\mu>0$, $\mu\ll 1$ as relaxation coefficient. There are two strategies:
either we use a non-conservative scheme that discretizes the equation~\eqref{eq:40}, or we use a conservative scheme on the conservative form
\begin{equation}
\partial_t(\eta \bu^{dry})+\nabla\cdot(\eta\bu^{dry}\otimes\bu^{dry})
= -g\eta\nabla z + \eta\, \frac{\bu-\bu^{dry}}{\mu}.
\label{eq:41}
\end{equation}
using the additional variable $\eta$ solution of~\eqref{eq:36}. Let us comment is more details the numerical procedure. First, we have to define a field $\eta^n$ and
a field $\bu^{dry}$ at time $t^n$. Because $\eta^n$ is arbitrary, it is simply chosen as
\[
\eta^n_K = 1\quad \forall K\in\mathcal{T}^h.
\]
We project the initial dry velocity onto the actual velocity field $\bu^n$:
\[
(\bu^{dry})^n_K = \bu_K^n\quad \forall K\in\mathcal{T}^h.
\]
Then we apply the Lagrange-flux explicit scheme on the system
\begin{eqnarray*}
&& \partial_t \eta + \nabla\cdot(\eta\bu^{dry}) = 0, \\ [1.1ex]
&& \partial_t (\eta\bu^{dry}) +\nabla\cdot(\eta\bu^{dry}\otimes\bu^{dry}) 
= -g\eta\nabla z.
\end{eqnarray*}
For one-dimensional problems, this would give
\begin{eqnarray*}
&& \eta_{j}^{n+1} = 1 
- \frac{\Delta t}{\Delta x}\left(\Phi_{\eta,j+1/2}^n-\Phi_{\eta,j-1/2}^n\right), \\ [1.1ex]
&& \eta_j^{n+1}\, (u^{dry})_j^{n+1} = u_j^n - \frac{\Delta t}{\Delta x} \left(\Phi_{\eta u,j+1/2}^n
-\Phi_{\eta u,j-1/2}^n \right)  - \frac{\Delta t}{\Delta x} g
\left( z_{j+1/2}-z_{j-1/2}\right).
\end{eqnarray*}
%
%
\begin{remark}
For water depth vanishing conditions, the auxiliary variable $\eta$ can be interpreted 
as a water depth renormalization, where a scaling is operated in order to return
a rescaled depth of order $O(1)$. Because there is an uncertainty on the 
good scale to apply, we simply express the rescaling by $\eta^n=1$.
\end{remark}
Once the candidate velocity $\bu^{dry}$ in case of dry or near-dry conditions is computed, we have to define smoothed switching strategies to compute the
actual velocity. We proceed as follows:
for a given momentum value $(h\bu)$ and a given dry velocity $\bu^{dry}$, we search for a velocity vector $\bu^\star$ solution of the Tykhonov-regularized mean
square problem
\begin{equation}
(\mathcal{P}_\varepsilon)\quad\quad \min_{\bv}\quad 
\frac{1}{2} \| (h\bu) - h\, \bw\|^2 + \varepsilon\, \|\bw-\bu^{dry}\|^2
\label{eq:42}
\end{equation}
\begin{equation}
\bu^\star := \dfrac{h(h\bu) + \varepsilon\, \bu^{dry}}{h^2 + \varepsilon^2}.
\label{eq:43}
\end{equation}
Of course, for non vanishing $h$ and rather small $\varepsilon$, we have
$\bu^\star\approx \bu$ and for vanishing $h$, we get $\bu^\star\approx \bu^{dry}$.
This is similar to usual velocity fixes found in the literature, as in Kurganov and Petrova~\cite{kurganov} for example, but the authors actually 
consider~$\bu^{dry}=0$. The
choice $\bu^{dry}=0$ creates an artificial viscosity/damping into depth-vanishing
regions and in particular water cannot completely drain away.  
For the choice of $\varepsilon$, it can be adapted to the cell size as suggested
by Kurganov-Petrova.
\section{Empirical model of debris transportation and deposition}
%
This section is dedicated to the modeling of debris dynamics carried away by flooding or tsunami waves, including possible coalescence and deposition effects. For foreseen risk analysis purposes, the model has to be able to return high-level debris features and damage functions with sufficient fidelity. One must find a good trade-off between computational efficiency and ability to return quantities of interest like damage functions.
\subsection{Requirements}
%
Conditions of use of the debris models for risk analysis lead to the 
following requirements:
\begin{enumerate}
\item The computational debris models have to be rather efficient in terms of computational complexity, preferably at the order of the Saint-Venant numerical
solvers; \label{enu:1}
\item Moreover, the model complexity has to be independent of the number of debris, allowing for large-scale debris computations; \label{enu:2}
\item The model has to be able to return quantities of interest 
like damage functions for risk analysis.
\end{enumerate}
\subsection{Derivation of a model and model analysis}
%
For transport-dominated problems, we have the choice between Lagrangian or
Eulerian models. Lagrangian models track trajectories debris particles or group of particles and, because of that, are more accurate than Eulerian models. Unfortunately,
the algorithmic complexity is proportional to the number of debris particles and thus do not fulfill the requirement nb~2. Thus we rather move toward Eulerian
models, expressed in terms of averaged density quantities.

In the literature, one can find volume-averaged multiphase-based debris models
\cite{Pudasaini2012,Pudasaini2005,Bouchut2013,Hutter96,Iverson97}. But these models are known to be highly computationally expensive and thus
are irrelevant in our case. 

It is here proposed  to emulate well-known vehicle-based traffic flow for modeling debris dynamics, up to some adjustments.
\subsubsection{One-dimensional model}
%
For a one-dimensional liquid+debris flow, debris are driven by the main liquid flow
and are following themselves. Let us first consider a Lagrangian description of the flow. For a debris particle~$j$ of mass $m$, located at position $x_j(t)$ at time $t$ at velocity $v_j(t)$ and following the debris particle number $(j+1)$, we have the motion
equations
\begin{eqnarray}
&& \frac{d x_j}{dt} = v_j, \label{eq:44}\\ [1.1ex]
&& m\frac{d v_j}{dt} = - D_j(d) - F_j(f) + I_j(t),\label{eq:45}
\end{eqnarray}
where the term $D_j(t)$ represent a drag force due to the driving shallow water
flow, $F_j(t)$ is a ground friction term for debris in contact with ground and
$I_j(t)$ is a term that represents the interaction of debris nb $j$ with its
neighboring ones. These three terms have to be modelized and closed with respect
to the variables of computations.

For the drag term, one can simply use the relaxation term
\[
D_j(t) = m\,\frac{u(x_j(t))- v_j(t)}{\tau_D}
\]
where $\tau_D>0$ is a characteristic relaxation time. For the ground friction 
term, one can also use a relaxation term
\[
F_j(t) = -\omega_F \, m\, v_j(t),
\]
but the relaxation rate $\omega_F$ is a function of the water depth $h$. Let us introduce $h_f>0$
the characteristic debris plunge depth. It is expected that a debris rapidly
stops when $h<h_f$ by ground contact. On the other hand, when $h\gg h_f$, there
is no friction with ground and the characteristic relaxation time should be infinite.
We then propose to use the rather simple function 
\begin{equation}
\omega_F = \omega_F(h) = \frac{m}{\tau_F} \max\left(1,\ \frac{h_f}{h}\right)\,
\max\left(0, 1- \frac{h}{h_f}\right)^\beta,
\label{eq:46}  
\end{equation}
where $\beta>0$ and $\tau_F>0$ is a constant friction characteristic time. 

Debris interaction terms are needed to take into account unresolved small scale local water flow towards debris (rear debris recirculation, vortexes, 
suction, ...).  For one-dimensional problems, debris objects cannot overtake
themselves and the debris order is preserved. This can be expressed
by an acceleration/slowing down term as used in vehicular traffic flow \cite{awrascle,zhang}
(referred to as an anticipation term in car-following models). Here, we
empirically define the interaction term as a velocity relaxation between
neighboring debris: if $x_{j+1}(t)-x_j(t)>0$, for a flow moving to the right, one can define for example~$I_j(t)$ as
\begin{equation}
I_j(t) = m\, a\, \frac{v_{j+1}(t)-v_j(t)}{x_{j+1}(t)-x_j(t)}
\label{eq:47}
\end{equation}
where $a>0$ is homogeneous to a speed. 
The speed quantity $a$ can also be modeled. One could for example take it
proportional to the local debris velocity, i.e. $a=\lambda v_j(t)$, where 
$\lambda>0$ is a dimensionless constant. For a general debris, we then consider
following interaction term:
\begin{equation}
I_j(t) = \lambda\, m\, v_j(t)\, \frac{v_{j+1}(t)-v_j(t)}{x_{j+1}(t)-x_j(t)}
\label{eq:48}
\end{equation}
As a summary, we consider the following discrete dynamical system:
{\small
\begin{eqnarray}
&& \dot{x_j} = v_j, \label{eq:49}\\ [1.1ex]
&& \dot{v_j} = \frac{u(x_j(t))- v_j(t)}{\tau_D} 
- \frac{1}{\tau_F} \max\left(1,\ \frac{h_f}{h}\right)\,
\max\left(0, 1- \frac{h}{h_f}\right)^\beta\, v_j
+ \lambda\,  v_j(t)\, \frac{v_{j+1}(t)-v_j(t)}{x_{j+1}(t)-x_j(t)}.
\label{eq:50} 
\end{eqnarray}
}
\subsubsection{Continuous flow limit}
%
This discrete model allow us to derive a continuous model by a scaling process
in both space and time. Let us introduce the density of debris denoted $\rho$.
The local density can be linked to inter-debris distance according to the formula
\begin{equation}
(\rho_d)(x_j(t)) = \frac{\rho^0\, \ell}{x_{j+1}(t)-x_j(t)}
\label{eq:51}
\end{equation}
where $\ell$ is a characteristic length of debris and $\rho^0$ is the maximum
debris density (without overlapping). In particular, from equation~\eqref{eq:49}
we have
\begin{equation}
\frac{d}{dt}(x_{j+1}-x_j) = \frac{v_{j+1}-v_j}{x_{j+1}-x_j}\, 
\left( x_{j+1}-x_j\right).
\label{eq:52}
\end{equation}
Applying the scaling to the expression~\eqref{eq:52} leads to a
continuous medium equation of conservation of the number of debris
\begin{equation}
\frac{D (1/\rho)}{Dt} = \frac{1}{\rho} \frac{\partial v}{\partial x}
\label{eq:53}
\end{equation}
where $\rho$ and $v$ are now functions of both space and time and
$D_t = \partial_t + v\partial_x$ is the Lagrangian derivative.
Equation~\eqref{eq:53} can be written in conservative form as
\begin{equation}
\partial_t \rho + \partial_x (\rho v) = 0.
\label{eq:54}
\end{equation}
The scaling applied to the expression~\eqref{eq:50} leads to the following
partial differential momentum-like equation
\begin{equation}
\rho D_t v - \lambda \rho v\, \frac{\partial v}{\partial x} = \frac{u-v}{\tau_D}
- \frac{1}{\tau_F}\max\left(1,\frac{h_f}{f}\right)
\max\left(0, 1-\frac{h}{h_f}\right)^\beta \, v,
\label{eq:55}
\end{equation}
or in Eulerian description
\begin{equation}
\partial_t (\rho v) + \partial_x(\rho v^2) - \lambda \rho v\, \frac{\partial v}{\partial x} = \frac{u-v}{\tau_D}
- \frac{1}{\tau_F}\max\left(1,\frac{h_f}{f}\right)
\max\left(0, 1-\frac{h}{h_f}\right)^\beta \, v.
\label{eq:56}
\end{equation}
In all what follows, for simplicity we will denote by $S=S(h,u,v)$ the right hand side
of equation~\eqref{eq:56} and will be referred to as the source term. 
For smooth solutions, equation~\eqref{eq:56} can be rewritten under the conservative
form
\begin{equation}
\partial_t v + \partial_x \left((1-\lambda)\frac{v^2}{2}\right) = \frac{S}{\rho}
\label{eq:57}
\end{equation}
that resembles a Burgers-like equation.
\subsubsection{Why velocity interaction terms are necessary~?}
%
In this section, we discuss the important role of the interaction term 
\[
I = - \lambda \rho v\, \frac{\partial v}{\partial x}.
\]
to get a well-posed mathematical problem. For simplicity purpose let us consider the particular case~$S=0$ (no drag and no friction forces). The system of partial
differential equation reads
\begin{eqnarray*}
&& \partial_t \rho + \partial_x(\rho v) = 0, \\ [1.1ex]
&& \partial_t v + (1-\lambda)\partial_x(v^2/2) = 0.
\end{eqnarray*}
As soon as $\lambda\neq 1$ and $v\neq 0$, the system is clearly hyperbolic with two
distinct eigenvalues $\lambda_1=(1-\lambda) v$ and $\lambda_2=v$.

Let us consider now the case $\lambda=0$ (non-existent interaction term). Then the system is
\begin{eqnarray*}
&& \partial_t \rho + \partial_x(\rho v) = 0, \\ [1.1ex]
&& \partial_t (\rho v) + \partial_x( \rho v^2) = 0.
\end{eqnarray*}
It is strictly equivalent to the so-called ``pressureless Euler equations''
(see for example Bouchut~\cite{bouchut}). This
system is not hyperbolic due to the unique eigenvalue $\lambda_1=v$ of multiplicity~2.
Actually this system can have weak measure solutions with the appearance of
what is called delta-shocks that are nothing else but Dirac measures. Indeed,
the second equation can be rewritten in $v$-variable as an inviscid Burgers equations, which is autonomous and can develop discontinuities in velocity. 
At discontinuity lines, the mass conservation equation says that there is
mass concentration. From the point of view of debris model, this would
represents Dirac concentration of debris, what is not really realistic from 
a physical point of view. This model would be the lowest-regularity description
of debris concentration.

In this sense, the interaction term acts as a regularization term that ``smooths''
debris concentration waves. Notice that in the particular case $\lambda=1$, the
interaction term kills the convective term  and we
get the interesting (conservative) simple model
\begin{eqnarray*}
&& \partial_t \rho + \partial_x (\rho v) = 0, \\ [1.1ex]
&& \partial_t v = \frac{S}{\rho}.
\end{eqnarray*}
Remark that the second equation can also be written
\[
\partial_t (\rho v) + v\, \partial_x(\rho v) = S,
\]
so that we have the balance law in conservation form
\[
\partial_t (\rho^2 v) + \partial_x(\rho^2 v) = S.
\]
\subsubsection{Extension to two-dimensional empirical debris model}
%
For two-dimensional problems, the models of drag and friction effects are
kept unchanged, except that they are now vector-valued~:
\[
\bS = \frac{\bu-\bv}{\tau_D}
- \frac{1}{\tau_F}\max\left(1,\frac{h_f}{f}\right)
\max\left(0, 1-\frac{h}{h_f}\right)^\beta \, \bv.
\]
For the modeling of debris interaction, the one-dimensional
debris-following approach is no more available. We simply empirically extend
the formula found in the 1D case, and propose to use
\begin{equation}
I = -\lambda\, \rho(\nabla\cdot\bv)\, \bv.
\label{eq:58}
\end{equation}
This empirical extension can be justified by the fact that $\nabla\cdot \bv$
represent a compressible factor that is positive for expansion conditions
and negative for compressive configuration. By this way, the regularizing interaction term acts in the opposite direction of 
$\mathop{sgn}(\nabla\cdot\bv)\bv$. \medskip

We get the following system of PDEs:
\begin{eqnarray}
&& \partial_t \rho + \nabla\cdot(\rho\bv) = 0, 
\label{eq:59} \\ [1.1ex]
&& \partial_t (\rho\bv) + \nabla\cdot(\rho \bv\otimes\bv) 
- \lambda \, \rho(\nabla\cdot\bv)\, \bv = \bS.
\label{eq:60}
\end{eqnarray}
For smooth solutions, the second equation can be written in velocity variable as
\[
\partial_t \bv + \bv\cdot\nabla\bv - \lambda \, (\nabla\cdot\bv)\, \bv = 
\frac{\bS}{\rho}.
\]
Unfortunately, unlike the one-dimensional case, the system cannot be written
in conservative form, except in the case $\lambda=1$. For $\lambda=1$, it is
easy to check that the momentum variable $\rho\bv$ is solution of the
simple transport-reaction equation
\begin{equation}
\partial_t (\rho\bv) + \bv\cdot\nabla(\rho\bv) = \bS.
\label{eq:61}
\end{equation}
Combining it with the continuity equation, we get the 
balance law in conservative form
\begin{equation}
\partial_t (\rho^2\bv) + \nabla\cdot(\rho^2\bv\otimes\bv) = \rho\bS.
\label{eq:62}
\end{equation}
Denoting $\bv=(v,w)$, the homogeneous part of the system can be written in 
quasi-linear form 
\begin{equation}
\partial_t \bv + A_x\, \partial_x \bv + A_y\, \partial_y \bv = 0
\label{eq:63}
\end{equation}
with 
\[
A_x =  \begin{pmatrix}0 & 0 \\ -w & v \end{pmatrix} ,\quad 
A_y = \begin{pmatrix}w & -v \\ 0 & 0 \end{pmatrix}.
\]
For any unit vector $\bm{\nu}=(\nu_x,\nu_y)$, we have
\[
A_\nu\ :=\ A_x\nu_x + A_y \nu_y = \begin{pmatrix} w\nu_y & -v\nu_y \\
-w \nu_x & v \nu_x \end{pmatrix}.
\]
It is clear that $\mathop{tr}(A_\nu)=\bv\cdot\bm{\nu}$ and $\mathop{det}(A_\nu)=0$
so that the eigenvalues of $A_\nu$ are $0$ and $\bv\cdot\bm{\nu}$.
\subsection{Two-way shallow water-debris coupling}
%
Accumulation of debris can create debris hills and thus act on the
global flooding flow. In that case, we have a two-way shallow water
equations-debris dynamics coupling. One can add to the ground topography $z$
a rising due to the debris, which is proportional to the debris density
(with factor $\mu<0$ in the next equations). Thus, the coupled system
is
\begin{eqnarray}
&& \partial_t h + \nabla\cdot(h\bu) = 0, \label{eq:64}\\ [1.1ex]
&& \partial_t (h\bu) + \nabla\cdot(h\bu\otimes \bu) + 
\nabla\cdot\left( g \frac{h^2}{2}\right) = -g\,h\,\nabla(z+\mu\rho), 
\label{eq:65} \\ [1.1ex]
&& \partial_t \rho + \nabla\cdot (\rho \bv) = 0, \label{eq:66} \\ [1.1ex]
&& \partial_t (\rho\bv) + \nabla\cdot(\rho \bv\otimes\bv) 
- \lambda \, \rho(\nabla\cdot\bv)\, \bv = \bS.
\label{eq:67}
\end{eqnarray}
\subsection{Visualization issues}
%
Visualization of debris dynamics added to the water flooding may be
a complementary tool for decision support and risk analysis.
One can visualize the debris density field but it is not so easy to 
merge both water surface and debris density. Another practical
and humanly comprehensible way is to visualize debris by debris
particles themselves. In some sense we may go back to a Lagrangian description
from the Eulerian computations, for visualization purposes only. While
both $\rho$ and $\bv$ are computed. We define a set of debris particles
can can be initially instantiated according to the initial debris density.
Then we compute the trajectories of the set of debris particles
\begin{eqnarray}
&& \Dot x_j = \bv(x_j(t),t), \quad t>0, \\ [1.1ex]
&& x_j(0) = x_j^0.
\label{eq:p}
\end{eqnarray}
From a user perspective, this will give him an overview of debris dynamics,
zones of concentration or accumulation, particular pathways, assessment
of human safety and risks for infrastructures. 
\subsection{Candidate damage functions}
%
Damages are caused by stresses and forces exerted on structures by both
water and debris. Considering debris, the accumulated momentum of debris
at a given point point $x$ and at time $T$ is
\begin{equation}
\bm{D}(x,T) = \int_0^T \rho \bv(x,t)\, dx.
\label{eq:68}
\end{equation}
Of course, debris density $\rho(.,T)$ itself is another indicator of 
possible damage. 
The case for example of density of debris releases into the sea is a good
indicator of risk assessment for future coastal shipping activities.
\section{Selected simple empirical debris model and associated numerical scheme}
%
As a pioneer work, we decide to use the simplest debris model, i.e. considering
$\lambda=1$ without two-way coupling. For numerical expectations, we rather
use the conservative formulation of debris model, that we recall here again:
\begin{eqnarray*}
&& \partial_t \rho + \nabla\cdot (\rho \bv) = 0, \\ [1.1ex]
&& \partial_t (\rho^2\bv) + \nabla\cdot(\rho^2 \bv\otimes\bv) 
 = \rho\bS.
\end{eqnarray*}
For numerical discretization, one can use a Lagrange-flux scheme to solve the
convective part of the system. Let us now focus on the numerical treatment of the source term in the system. At each time step~$\Delta t^n$ and each grid point $x$,
we have to solve the differential problem
\begin{eqnarray}
&& \frac{d\bv}{dt} = \frac{\bu-\bv}{\tau_D}
- \frac{1}{\tau_F}\max\left(1,\frac{h_f}{h}\right)
\max\left(0, 1-\frac{h}{h_f}\right)^\beta \, \bv, \label{eq:69}\\ [1.1ex]
&& \bv(t^n) =  \bv^n. \label{eq:70}
\end{eqnarray}
For envisaged applications, both relaxation times $\tau_D$ and $\tau_F$ can be rather
small compared to the time step, leading to stiff source terms. For that reason,
it appears important to use at least semi-implicit time integration schemes.

By freezing up exogenous terms with values taken at time $t^n$, we have the
following linear differential equation
\[
\frac{d\bv}{dt} = \frac{\bu^n}{\tau_D}
- \left[ \frac{1}{\tau_D} + \frac{1}{\tau_F}\max\left(1,\frac{h_f}{h^n}\right)
\max\left(0, 1-\frac{h^n}{h_f}\right)^\beta \right] \, \bv.
\]
We get the solution at time $t^{n+1}=t^n+\Delta t^n$
\begin{equation}
\bv^{n+1} = \bv(\Delta t) = 
\bv^n \,  e^{-\eta^n\Delta t^n} + \left(1-e^{-\eta^n\Delta t^n} \right) 
\frac{\bu^n}{\tau_D\, \eta^n}
\label{eq:71}
\end{equation}
with
\[
\tau^n = \frac{1}{\tau_D} + \frac{1}{\tau_F}\max\left(1,\frac{h_f}{h^n}\right)
\max\left(0, 1-\frac{h^n}{h_f}\right)^\beta.
\]
\section{Numerical experiments}
%
To show the effectiveness of the numerical model, we consider a two-dimensional case
defined on the rectangle spatial domain $[0,2]\times [0,1]$. 
The topography/bathymetry is that one of the figure~\ref{fig:4} below.
The tidal wave is generated by an artificial discontinuous water elevation.
The flow dynamics is made a coastal tidal wave that submerges all the surface,
then a pull-back wave starts and goes back to the sea. The three obstacles
also create backward waves.  For the debris, we initially define
a nonzero uniform debris density $\rho=2$ in the sub-zone $[0.7,\, 1]\times[0,1]$.
\begin{figure}[h!]
\centering\includegraphics[width=0.9\textwidth]{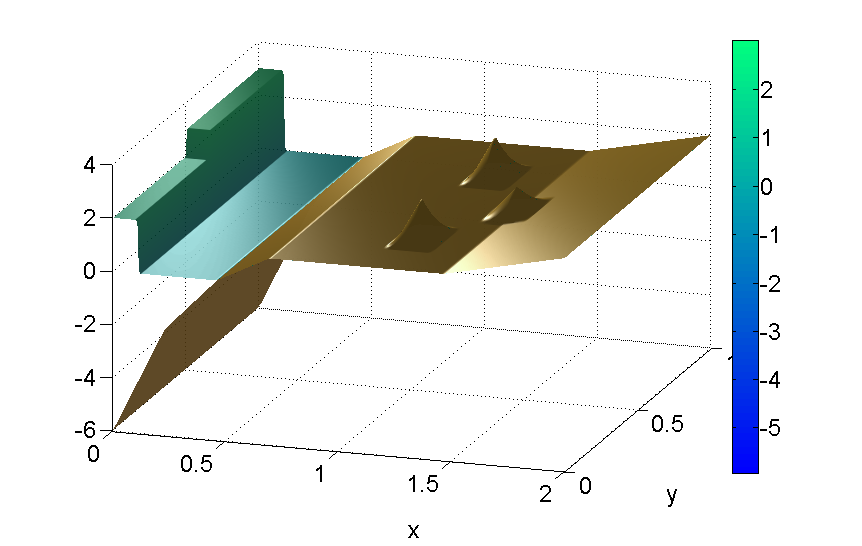}
\caption{Configuration at initial time with topography/bathymetry (in brown color)
and water depth (green-blue color).}\label{fig:4}
\end{figure}

Numerical results showing the sequence of the flow dynamics is given in figure~\ref{fig:5}. One can observe first the compression of the debris caused by 
the tidal wave, then the transportation of the debris that either land on the 
borders of the three hills or concentrate between the hills. We also compute
a cumulative damage function as the time integral of the norm of the 
debris momentum:
\[
D(.,t) = \int_0^t \rho |\bv|(.,s)\, ds.
\]  
One can observe that most damages are located as intuitively guessed
between the obstacles where the flow rate is maximal. 
This promising qualitative result let us think that our debris model
is appropriate for damage assessment caused by debris. 
Of course, for a quantitative computation, we would need to calibrate
debris flotation heights and drag coefficients from real measurements or 
laboratory experiments.
\begin{figure}[h!]
\begin{center}
\includegraphics[width=0.32\textwidth]{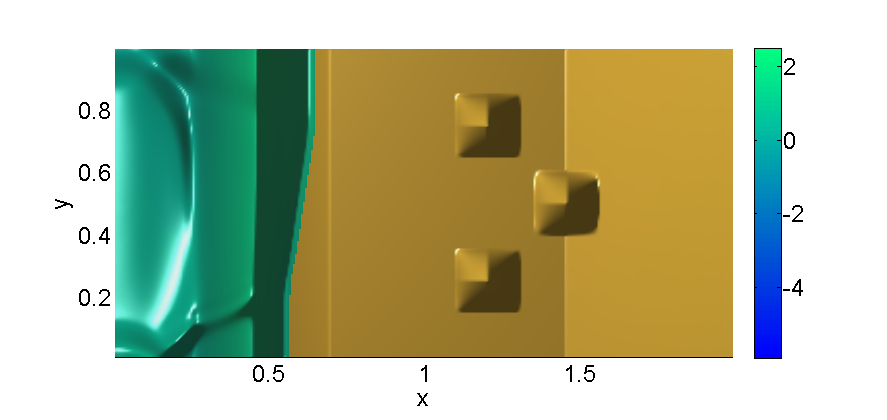}
\includegraphics[width=0.32\textwidth]{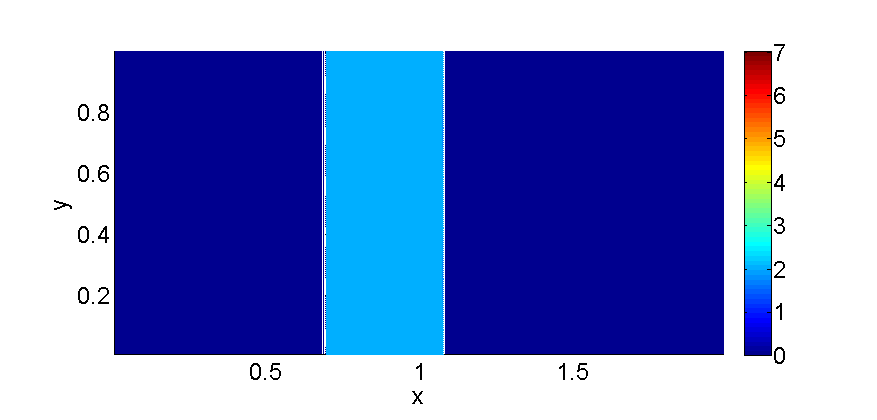}
\includegraphics[width=0.32\textwidth]{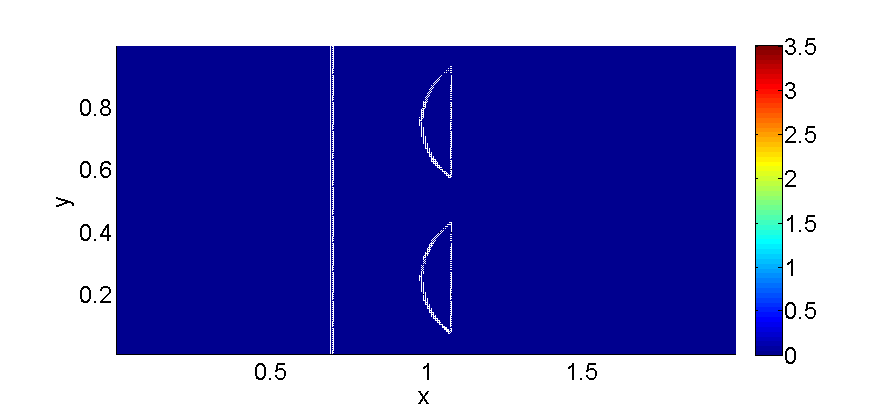}
%
\includegraphics[width=0.32\textwidth]{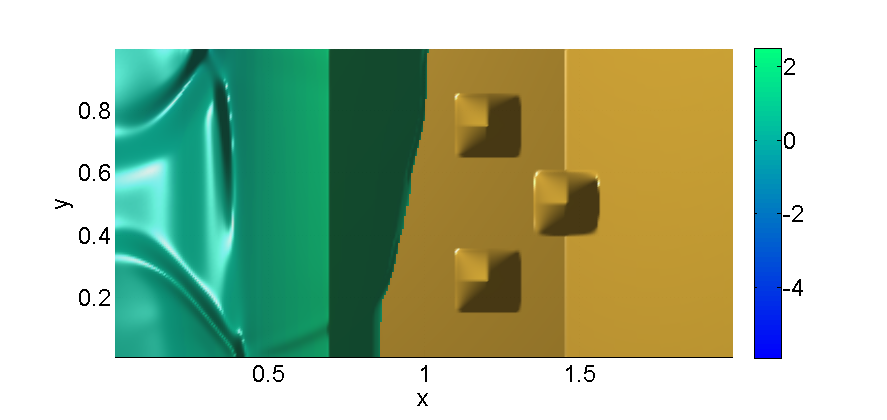}
\includegraphics[width=0.32\textwidth]{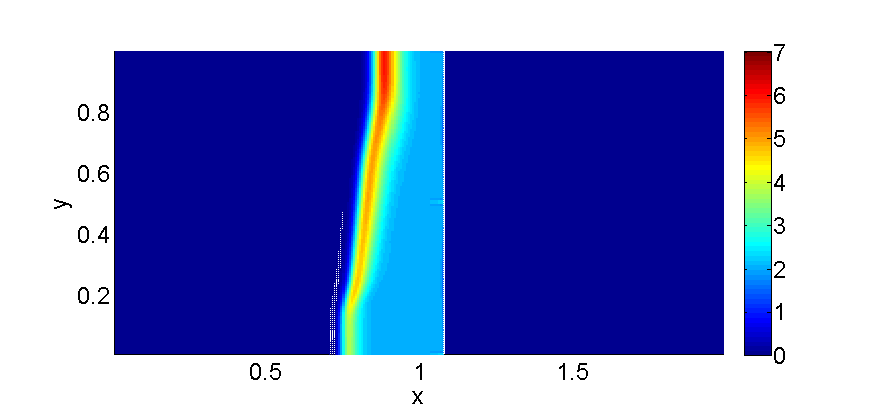}
\includegraphics[width=0.32\textwidth]{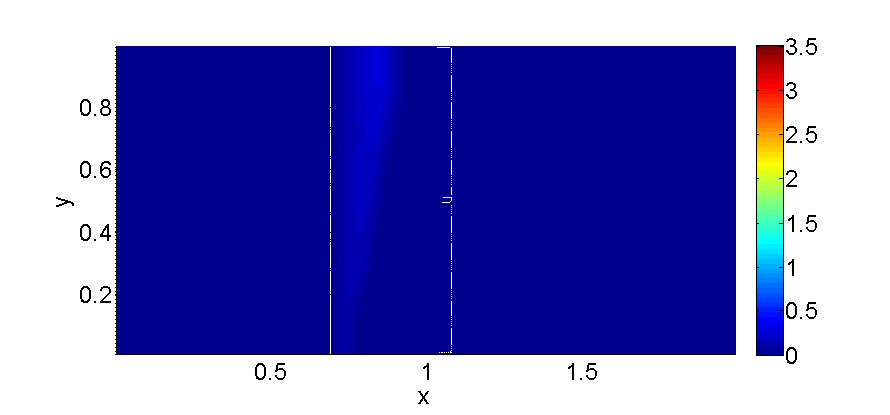}
\includegraphics[width=0.32\textwidth]{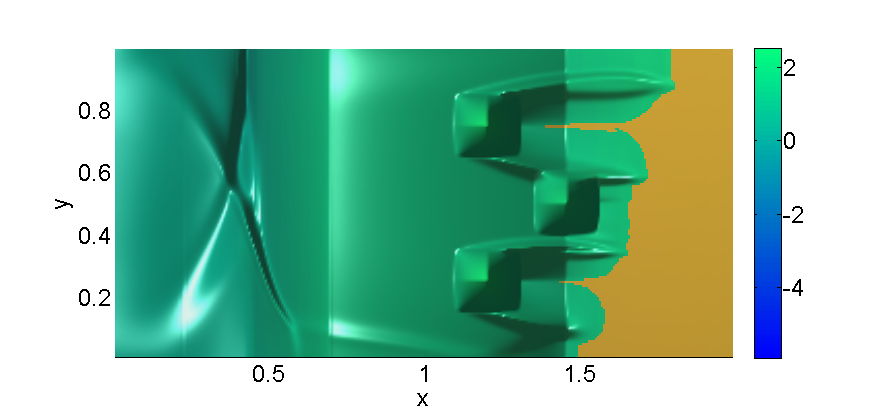}
\includegraphics[width=0.32\textwidth]{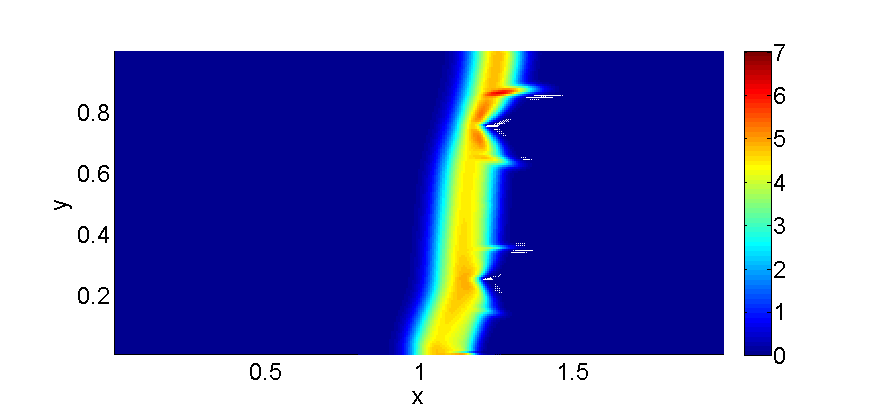}
\includegraphics[width=0.32\textwidth]{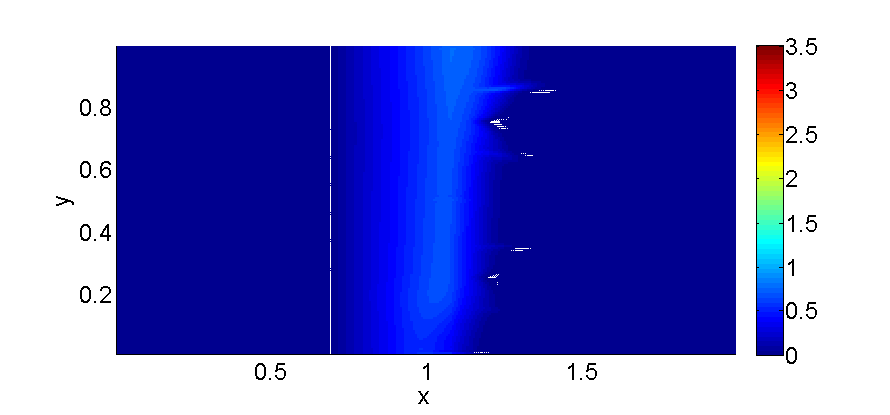}
\includegraphics[width=0.32\textwidth]{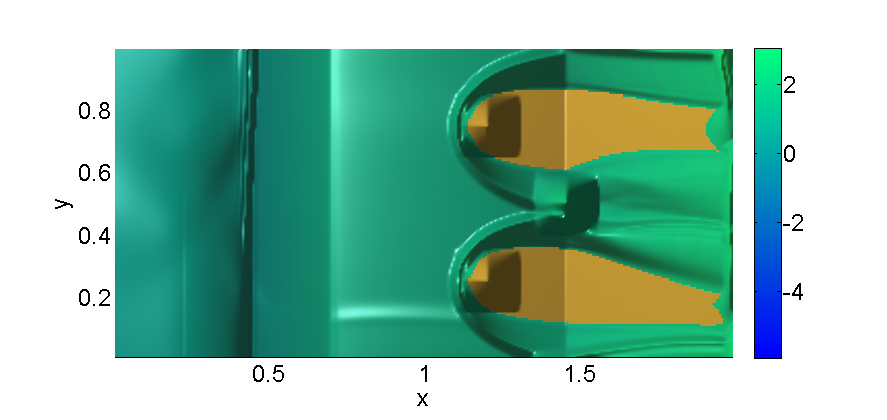}
\includegraphics[width=0.32\textwidth]{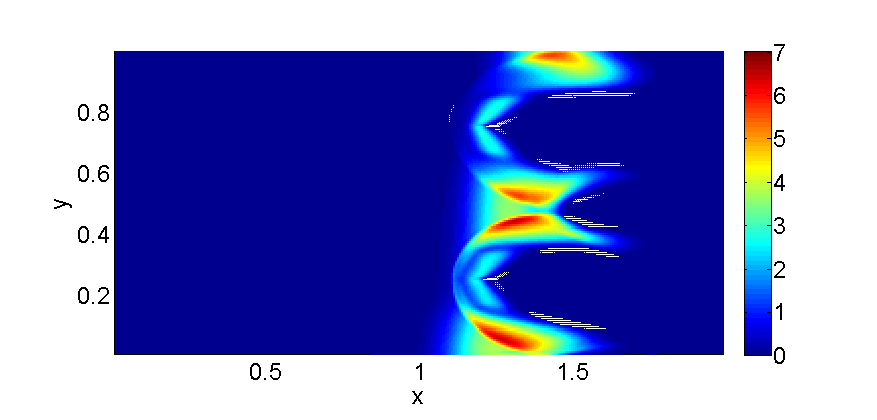}
\includegraphics[width=0.32\textwidth]{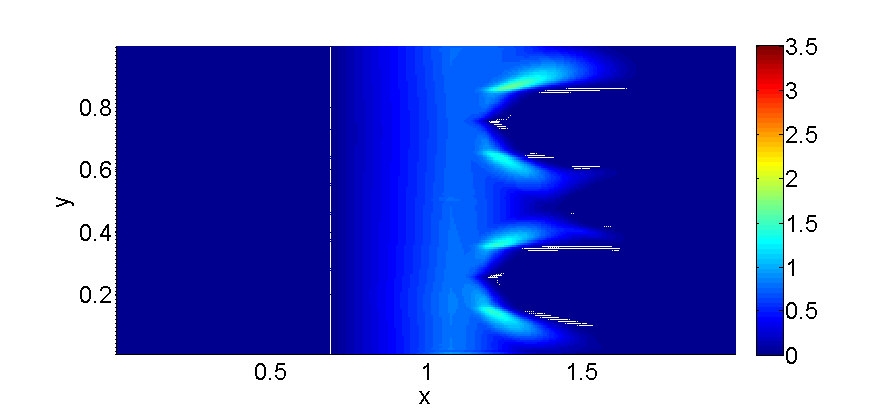}
\includegraphics[width=0.32\textwidth]{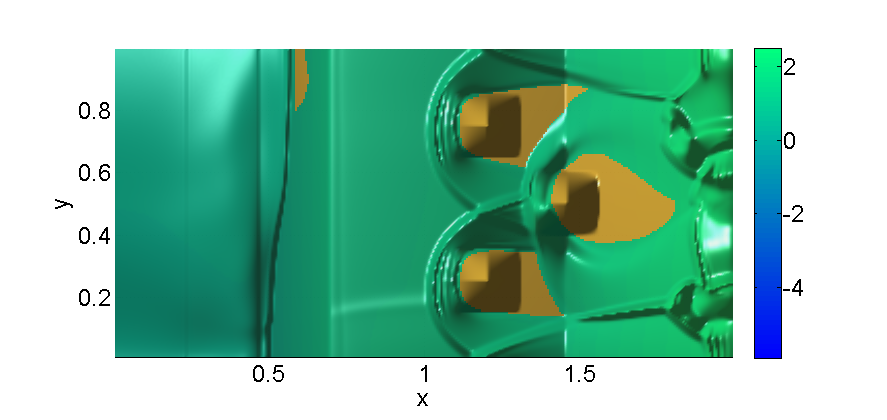}
\includegraphics[width=0.32\textwidth]{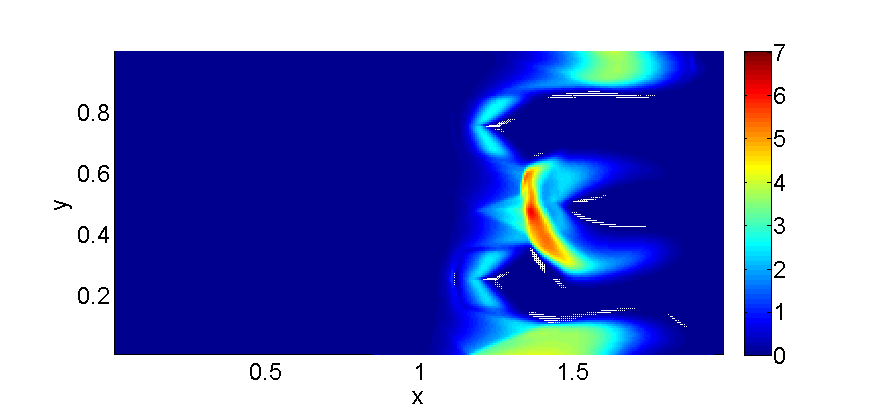}
\includegraphics[width=0.32\textwidth]{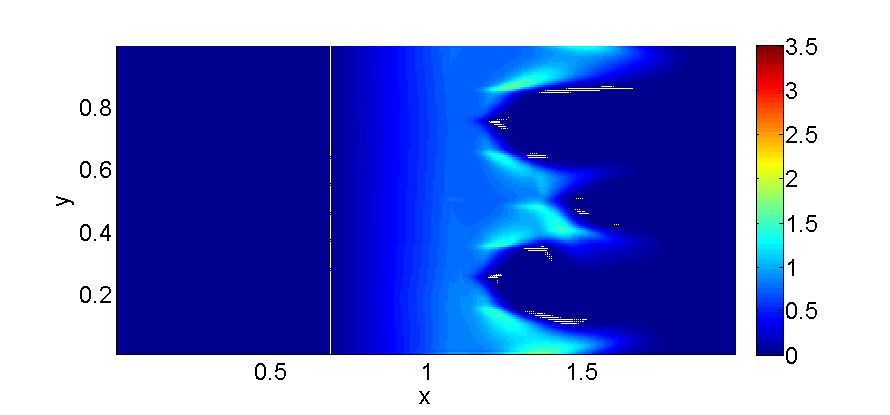}
\includegraphics[width=0.32\textwidth]{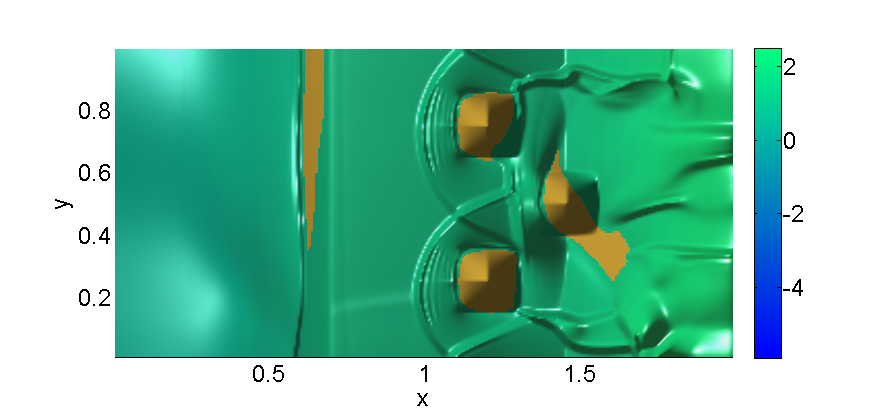}
\includegraphics[width=0.32\textwidth]{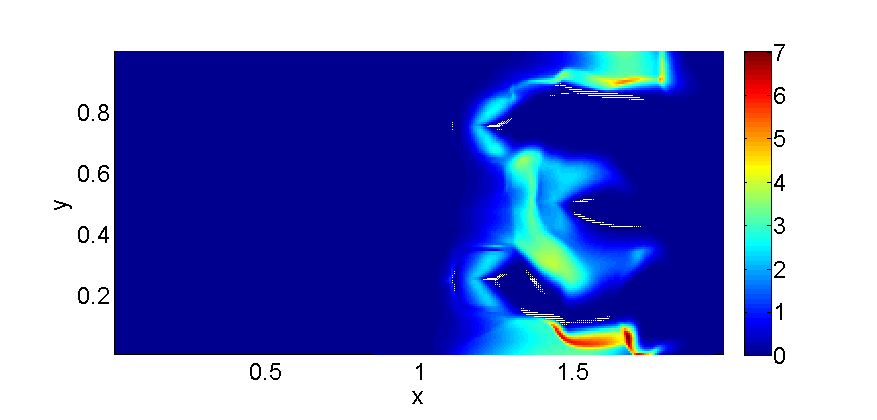}
\includegraphics[width=0.32\textwidth]{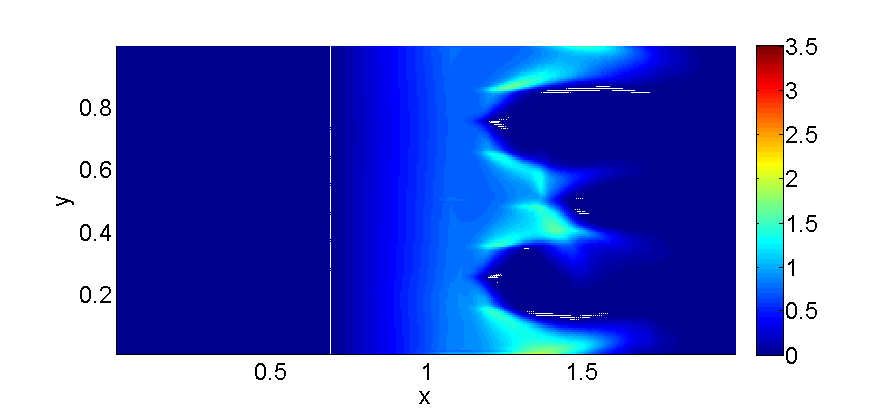}
\includegraphics[width=0.32\textwidth]{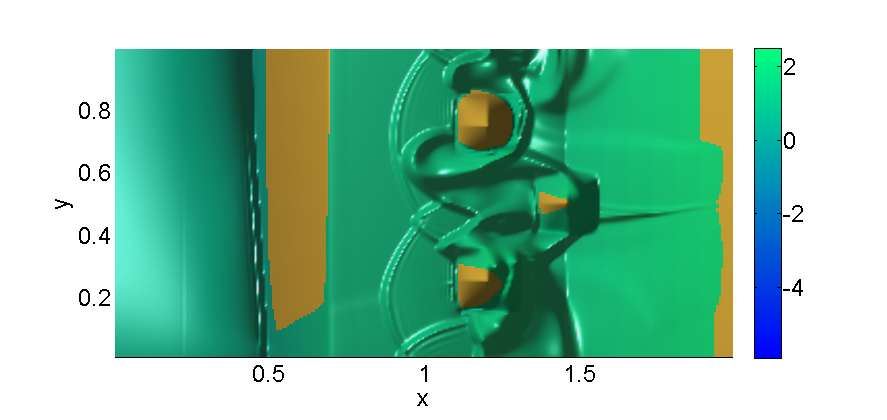}
\includegraphics[width=0.32\textwidth]{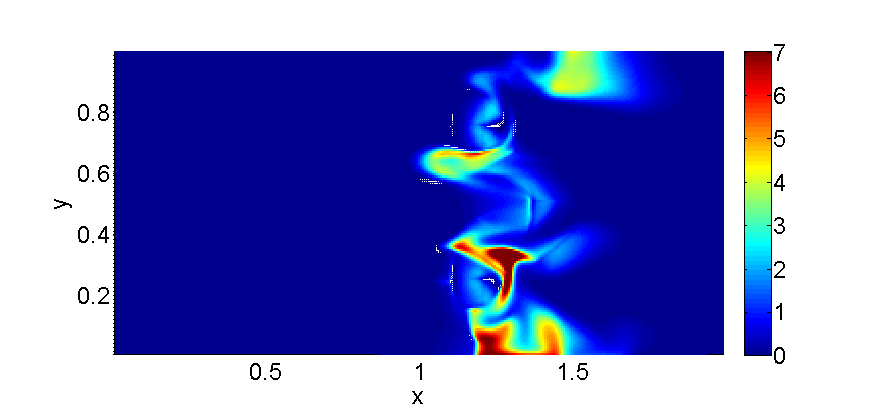}
\includegraphics[width=0.32\textwidth]{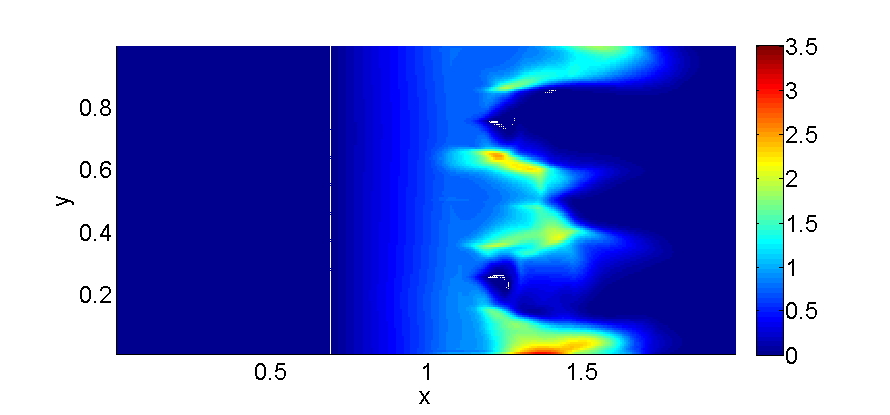}
\includegraphics[width=0.32\textwidth]{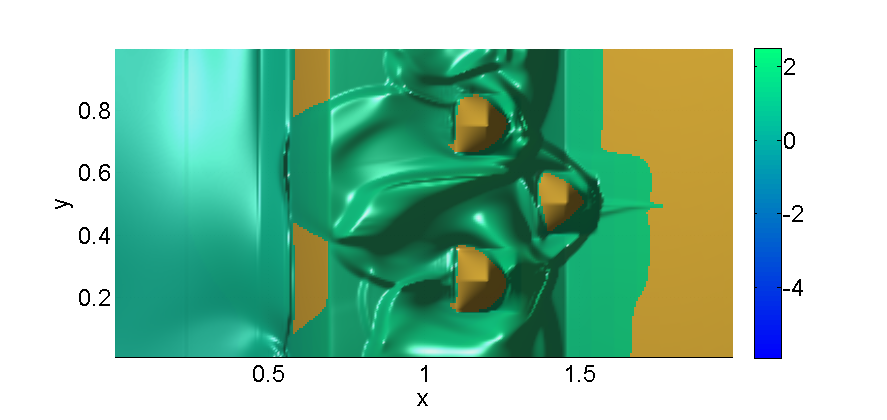}
\includegraphics[width=0.32\textwidth]{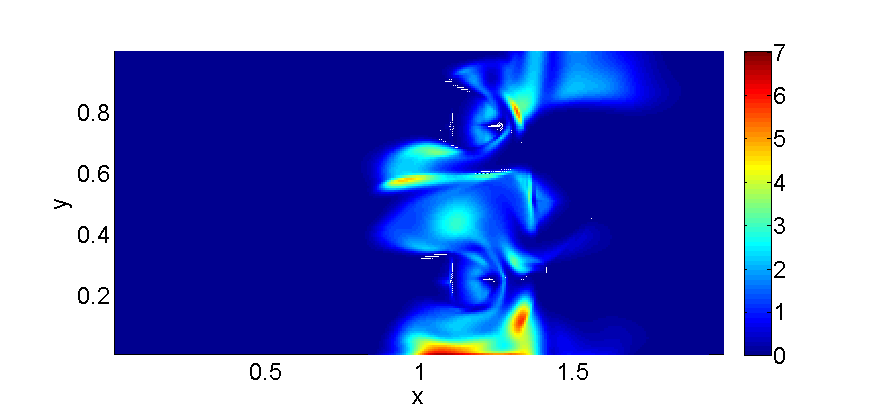}
\includegraphics[width=0.32\textwidth]{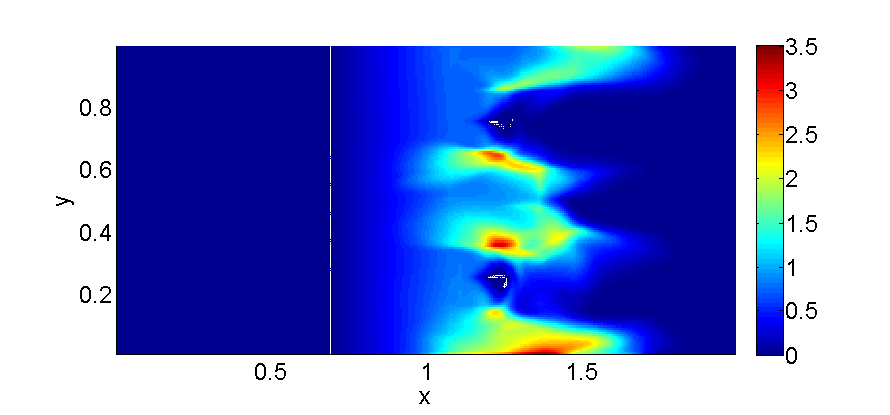}
\includegraphics[width=0.32\textwidth]{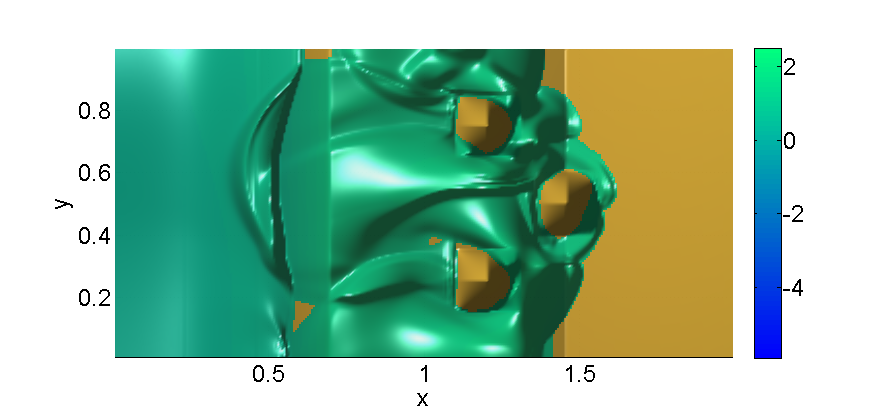}
\includegraphics[width=0.32\textwidth]{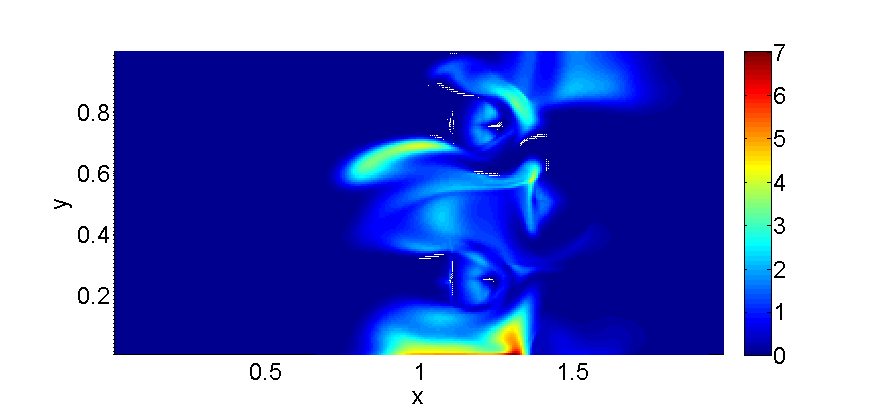}
\includegraphics[width=0.32\textwidth]{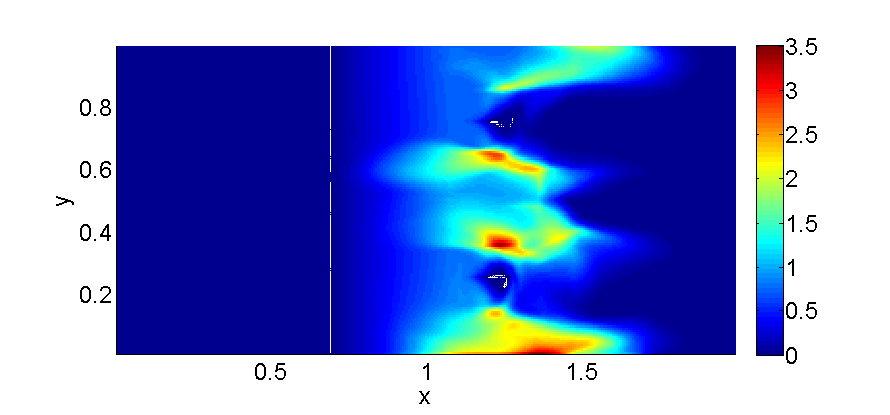}
\end{center}
\caption{Numerical experiment of tsunami flooding and debris dynamics with damage function.
From left to right: flooding flow, debris density and damage indicator. Top to bottom: at different successive times of the flooding.}\label{fig:5}
\end{figure}
\section{Outlook and next working program}
%
The next step is to introduce this debris model into a high-performance code like GPU-VOLNA \cite{dutykh,giles}, then use a computer design of experiment (DoCE) with damage analysis, sensitivity analysis and
uncertainty quantification. 

A way to reduced data dimensionality is to consider reduced-order models (ROM)
for spatially-defined damage functions like those proposed in the document.
One can notice from the numerical results that that damage function defined
by the time integral of debris momentum is a rather smooth function in space
so that model-order reduced can be envisaged. This approach will simplify
the generation of emulators for damage functions.

We will also try to propose original numerical methodologies but also think about suitable parallel software environments to process
the data and extract knowledge in this risk assessment framework.
\section*{Acknowledgments} 
%
I would like to warmly thank Professor Serge Guillas from the Department of Statistical Science of UCL who supported my candidature as UCL Big Data Institute invited Researcher. This work is also partly supported and granted by the french CNRS multidisciplinary program ``D\'efi Littoral'' 2015-2016.
%
%

%
\end{document}